\newtheorem{thm}{Theorem}
\newtheorem{cor}[thm]{Corollary}
\newtheorem{lem}[thm]{Lemma}
\newtheorem{mdef}{Definition}
\begin{document}

\title{Distributed Consensus Formation Through Unconstrained Gossiping}

\author{Christopher~D~Hollander,~\IEEEmembership{Student Member,~IEEE,}
        and~Annie~S.~Wu%
\thanks{C.D. Hollander and A.S. Wu are with the Department of Electrical Engineering and Computer Science, University of Central Florida, Orlando,
FL, 32816. USA (e-mail: chris.hollander@gmail.com).}}
\maketitle


\begin{abstract}
Gossip algorithms are widely used to solve the distributed consensus problem, but issues can arise when nodes receive multiple signals either at the same time or before they are able to finish processing their current work load. Specifically, a node may assume a new state that represents a linear combination of all received signals; even if such a state makes no sense in the problem domain. As a solution to this problem, we introduce the notion of conflict resolution for gossip algorithms and prove that their application leads to a valid consensus state when the underlying communication network possesses certain properties. We also introduce a methodology based on absorbing Markov chains for analyzing gossip algorithms that make use of these conflict resolution algorithms. This technique allows us to calculate both the probabilities of converging to a specific consensus state and the time that such convergence is expected to take. Finally, we make use of simulation to validate our methodology and explore the temporal behavior of gossip algorithms as the size of the network, the number of states per node, and the network density increase.
\end{abstract}

\begin{IEEEkeywords}
consensus, distributed consensus algorithms, gossip algorithms, networks, markov chains
\end{IEEEkeywords}

\section{Introduction}

The distributed consensus problem asks how every node in a network can adopt the same state value for a given state variable when there is no centralized coordination mechanism \cite{Ren2005, Moallemi2006, Saber2007, Dimakis2010}. Applications of this problem frequently arise in the self-organization, cooperation, and coordination of complex and multi-agent systems. Some of the more commonly studied areas include resource location, formation flight of UAVs, attitude alignment of clusters of satellites, self-organization, automated highway systems, congestion control in communication networks, load balancing, rendezvous in space, distributed sensor fusion, belief propagation, convention emergence, and task allocation \cite{Dimakis2010, Kempe2004, Moallemi2006, Saber2007, Ren2005, Ren2007, Salazar2010}. The distributed consensus problem is also similar to the study of opinion dynamics and the spread of social norms in computational sociology \cite{Hollander2011a, Hollander2011b}.

In this paper, we propose a solution to the distributed consensus problem that makes use of gossip algorithms that allow for nodes to receive multiple simultaneous transmissions, either through parallel synchronized clocks or random chance. This approach is in contrast to the existing research on gossip algorithms in which it is assumed that each node receives only a single incoming transmission \cite{Boyd2006, Dimakis2010}. 

Our solution builds on the representation of gossip algorithms as linear systems \cite{Boyd2006} where the state update equation is defined as $\mathbf{x}(t+1) = \mathbf{W}(t) \mathbf{x}(t)$; where $\mathbf{x}(t)$ is the state vector of the nodes at time $t$ and $\mathbf{W}$ is a random matrix that determines how each node updates its state at time $t$. $W_{ij} = 1$ if node j transmits to node i, and 0 otherwise. Under this definition, the reception of multiple simultaneous transmissions can result in undesired values of $\mathbf{x}(t+1)$ by allowing nodes to take on a value that is a linear combination of the transmitting nodes. In order to avoid this undesired behavior, we redefine the state update equation as $\mathbf{x}(t+1) = f(\mathbf{W}(t)) \mathbf{x}(t)$; where $f(\mathbf{W}(t))$ is a binary relation between $\mathbf{W}(t)$ and a set of row stochastic matrices composed of $\{0,1\}$ entries. For simplicity, we call $f$ the \emph{conflict resolution algorithm}.

The motivation for our research stems from two fronts. First, gossip algorithms are widely used in solving the distributed consensus problem. As computing power increases and computing components become cheaper, it will become easier and easier to build very large decentralized systems. As the size of these systems increases, so too will the complexity of coordinating them. It is therefore important to ensure that they are robust to errors or other unforeseen events such as the reception of multiple simultaneous signals by a single autonomous component.

Second, the theoretical investigation of gossip algorithms is primarily centered on asynchronous and synchronous timing models, both of which guarantee that only two nodes are ever active at once. This ensures that no node ever receives more than a single transmission. This assumption simplifies analysis but, as we show, leaves open the door for interesting and unexpected behaviors upon violation. Such violations may occur as a result of misaligned clocks that allow multiple nodes to fire at once, or as a result of nodes receiving incoming transmissions faster than they are able to process them. If a node processes information slower than it receives it, then the node must either ignore the incoming information, or store it in a queue. If a queue is used, and only one transmission is handled at a time, then there is a risk of queue overflow. If, however, multiple elements are processed from the queue, or if it is emptied and processed as a set of multiple simultaneous transmissions, special treatment will be required to resolve conflicting information. 

Our solution to the distributed consensus problem is designed to handle conflicts that occur as a result of multiple simultaneous transmission by specifying a conflict resolution algorithm for each node. We will show that our solution is guaranteed to produce a consensus when certain assumptions hold and will describe one simple conflict resolution algorithm based on the random selection of an incoming state value. Furthermore, we will show that it is possible to predict the expected consensus state as well as the time required to reach that state. Finally, we will provide empirical data from computer simulations to validate our theoretical claims and then use our theory to explore how various network characteristics impact the temporal behavior of our solution.

A note on notation: we indicate matrices and vectors with bold upper and lowercase symbols, e.g. $\mathbf{M}$ for matrices and $\mathbf{v}$ vectors. Individual elements will be indexed, non-bold, lowercase symbols, e.g. $m_{ij}$ for matrices and $v_{i}$ for vectors.

\section{Related Work}

Given a network $G = (V, E)$ in which every node possesses a state variable, $x$, a gossip algorithm is a method of decentralized information exchange in which one node $u \in V$ transmits the contents of its state variable, $x_u$, to another node, $v \in V$, where $v$ is selected in accordance to some \emph{gossip mechanism}. Upon reception of node $u$'s transmission, node $v$ updates its own state variable, $x_v$, according to some \emph{gossip protocol} \cite{Karp2000, Kempe2002}. Much of the recent work on gossip algorithms make use of a gossip mechanism in which node $v$ is selected uniformly at random from the local neighborhood of node $u$ \cite{Kempe2002, Ganesh2003, Kempe2004, Dimakis2010}, but gossip mechanisms do exist where node $v$ is selected from the entire population \cite{Alon1985, Pittel1987, Demers1988, Hedetniemi1988, Feige1990, Hromkovic1996, Eugster2004, Dimakis2006, Dimakis2008} or node $u$ transmits to all local neighbors via flooding \cite{Kempe2004} or broadcasting \cite{Aysal2009}. Transmission can also be bidirectional \cite{Karp2000, Boyd2006, Cai2010} or unidirectional \cite{Schmalz2007, Fagnani2008, Schmalz2009}. In all cases, the frequency of transmission is controlled by an internal clock that ticks according to either an asynchronous timing model or synchronous timing model.

Under an asynchronous timing model, every node in the network has a clock which ticks according to a Poisson process with a rate of $\lambda$. This is equivalent to a single clock that ticks according to a Poisson process with a rate of $n \lambda$, where $n = |V|$ is the number of nodes in the network \cite{Boyd2006}. In practice, this means that at each tick an average of $n \lambda$ nodes are chosen independently and uniformly at random to transmit their state values.

Under a synchronous timing model, every node in the network has a clock that ticks at the same frequency. This results in all nodes transmitting their state values at exactly the same time. Because transmission occurs in parallel, any information received during a tick cannot be propagated further until the following tick. If it is required that transmission be pairwise disjoint, precautions must be taken to ensure that nodes are not the targets of multiple transmissions when a synchronous timing model is used.

In the existing analysis of gossip algorithms under both synchronous and asynchronous timing models, the \emph{gossip constraint} is often observed. The gossip constraint is responsible for the assumption a node will never receive multiple simultaneous transmissions because with a probability of at least $1 - 1/n$ \cite{Boyd2006} only two nodes will ever be in communication at the same time in the case of bidirectional transmission; and in the case of unidirectional transmission, only one node will transmit at a time. For the synchronous timing model, this implies that nodes must be matched so that their transmissions form disjoint pairs. This constraint greatly simplifies analysis and implementation of gossip algorithms in real world systems \cite{Dimakis2008}, but as we will show through the creation of gossip protocols that handle conflicting information at the node level, it is not strictly required in order to obtain a consensus. Furthermore, by allowing conflict to occur, it becomes much more natural to implement synchronous timing models, since matching is not required, and improbable conflict events under asynchronous timing models require no special treatment or attention. 

When information is spread through gossip in such a way that the gossip constraint is violated, we will use the term \emph{unconstrained gossip}. We will call gossip algorithms that are designed to handle the reception of multiple simultaneous signals \emph{unconstrained gossip algorithms}.

\section{Problem Specification}


Consider a directed graph, $G = (V, E)$ defined by a set of $n$ nodes, $V$, and a set of edges, $E = \{(u, v) : u, v \in V\}$, such that node $u$ points to node $v$. The neighbors of node $u$ are given by $N(u) = \{v : (u, v) \in E \wedge u \neq v\}$. Next, assume that time can be broken in to discrete intervals, where $t$ denotes the current interval. Let each node, $u \in V$, possess a clock that ticks $0$ or $1$ times per interval such that the node acts only when the clock ticks, and a state variable $x_{u} \in S$ whose value is time-dependent and in the set of valid state values, $S$. When the clock of node $u$ ticks, the value of $x_u$ is transmitted along a single outgoing edge to a single neighbor, $v \in N(u)$. Node $v$ updates $x_v$ upon reception of $x_u$ according to the \emph{state update equation} $x_v = h(x_u)$.

If $h(\cdot)$ is linear, then the state update equation $x_v = h(x_u)$ can be vectorized to allow the simultaneous transmission of state values. In this vectorized form, transmission at time $t$ can be written as the linear system
\begin{equation}
\label{eqn:state_update}
\mathbf{x}(t+1) = \mathbf{W}(t) \mathbf{x}(t)
\end{equation}
where $\mathbf{x}(t) \in \mathbb{R}^{n \times 1}$ is the column vector of node states at time $t$, and $\mathbf{W}(t) \in \mathbb{R}^{n \times n}$ is the \emph{transmission matrix} that specifies the source and target nodes of the transmission activity at time $t$. $\mathbf{W}$ is an independent and identically distributed random matrix whose value is determined as\footnote{If $\mathbf{W} = \mathbf{I} - \frac{(e_i - e_j)(e_i - e_j)^T}{2}$ then $\mathbf{W}$ represents the average consensus algorithm. If $\mathbf{W} = \mathbf{I} + e_i (e_j - e_i)^T$ then W represents a directed gossip algorithm.}
\begin{equation*}
  w_{ij} =
  \begin{cases}
  1 & \text{if node } j \text{ transmits to node } i \\
  0 & \text{otherwise} 
  \end{cases}
\end{equation*}

Unlike previous research on gossip-based distributed consensus algorithms \cite{Boyd2006, Dimakis2008, Aysal2009, Schmalz2009}, the transmission matrix we examine is not guaranteed to be row stochastic. This is a direct result of not enforcing the gossip constraint and allowing multiple simultaneous transmissions within the network. As a consequence, nodes may take on undefined state values, and consensus may be impossible to reach. In the rest of this paper we discuss how this consequence can be mitigated and describe a set of algorithms for unrestricted gossiping that can lead to the formation of a consensus despite the positive probability that nodes will receive multiple simultaneous transmissions.



\section{Problem Solution}

Given a communication network where nodes can receive multiple simultaneous transmissions, it is possible to reach a consensus state if there exists an algorithm, $f$, that transforms the transmission matrix, $\mathbf{W}$, into a row stochastic matrix, $\mathbf{A}$. It is sufficient for $f$ to produce a row stochastic matrix because it has been previously proven that if a matrix $\mathbf{A}$ is row stochastic, then $\mathbf{x}(t+1) = \mathbf{A}(t) \mathbf{x}(t)$ will converge to a consensus as $t \rightarrow \infty$ \cite{Boyd2006, Schmalz2007, Schmalz2009}. Furthermore, if $\mathbf{A}$ is row stochastic, then the consensus state is a fixed point of the system and as such will not change unless acted upon by an external influence.

Let $f:\mathbb{R}^{n \times n} \rightarrow \mathbb{R}^{n \times n}$ be an algorithm that transforms $\mathbf{W}$ into $\mathbf{A}$, then equation \ref{eqn:state_update} can be rewritten as
\begin{equation}
\label{eqn:state_update_f}
\mathbf{x}(t+1) = f(\mathbf{W}(t)) \mathbf{x}(t)
\end{equation}

We call $\mathbf{A}(t) = f(\mathbf{W}(t))$ the \emph{adoption matrix} at time $t$ and define it as the row stochastic matrix where
\begin{equation*}
  a_{ij} =
  \begin{cases}
  1 & \text{if node } i \text{ adopts the state of node } j \\
  0 & \text{otherwise} 
  \end{cases}
\end{equation*}
The adoption matrix denotes which transmitting nodes will actually be used to update the states of the receiving nodes.

\subsection{Stability of the consensus state}

The consensus state of (\ref{eqn:state_update_f}) is the vector $\mathbf{x}_c$ such that $x_i = x_j$ for all $x_i, x_j \in \mathbf{x}_c$. Thus $\mathbf{x}_c = k \mathbf{1}$, where $k \in S$.

\begin{lem}
\label{lem:stability}
$\mathbf{x}_c$ is a fixed point of $\mathbf{x}(t+1) = \mathbf{A}(t) \mathbf{x}(t)$.
\end{lem}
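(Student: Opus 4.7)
The plan is to show the fixed point property by direct computation, leveraging the fact that $\mathbf{A}(t)$ is row stochastic for every $t$ by the very definition of the adoption matrix given just before the lemma statement. The argument reduces to showing that $k \mathbf{1}$ is a right eigenvector of $\mathbf{A}(t)$ corresponding to eigenvalue $1$, which is the standard defining property of row stochastic matrices.

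First I would recall that by the definition of $\mathbf{A}(t) = f(\mathbf{W}(t))$, each row of $\mathbf{A}(t)$ has entries in $\{0,1\}$ and sums to $1$. In fact, because the entries are in $\{0,1\}$, each row contains exactly one nonzero entry, corresponding to the unique node whose state is adopted by node $i$ at time $t$. From row stochasticity it follows immediately that $\mathbf{A}(t) \mathbf{1} = \mathbf{1}$, since the $i$th component of the product is just the sum of the entries of row $i$.

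Next I would plug $\mathbf{x}_c = k \mathbf{1}$ into the update equation and use linearity of matrix multiplication:
\begin{equation*}
\mathbf{A}(t) \mathbf{x}_c = \mathbf{A}(t) (k \mathbf{1}) = k\, \mathbf{A}(t) \mathbf{1} = k \mathbf{1} = \mathbf{x}_c.
\end{equation*}
Hence $\mathbf{x}(t+1) = \mathbf{x}(t)$ whenever $\mathbf{x}(t) = \mathbf{x}_c$, which is precisely the assertion that $\mathbf{x}_c$ is a fixed point of the dynamics, independent of the particular realization of $\mathbf{A}(t)$.

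There is no genuine obstacle here; the lemma is essentially a restatement of the row stochastic property applied to the constant vector. The only subtlety worth flagging is that the argument holds for every admissible $\mathbf{A}(t)$ simultaneously, so the fixed point property is robust to the randomness in $\mathbf{W}(t)$ and to whichever conflict resolution algorithm $f$ is chosen, as long as $f$ indeed produces a row stochastic $\{0,1\}$ matrix as required. This robustness is what makes the lemma useful for the subsequent convergence and absorbing Markov chain analyses.
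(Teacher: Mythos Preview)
Your proof is correct and follows essentially the same approach as the paper: both use row stochasticity of $\mathbf{A}(t)$ to conclude $\mathbf{A}(t)\mathbf{1}=\mathbf{1}$ and then infer $\mathbf{A}(t)(k\mathbf{1})=k\mathbf{1}$. The only cosmetic difference is that the paper phrases this via eigenvectors (``$\mathbf{1}$ is an eigenvector with eigenvalue $1$, and scalar multiples of eigenvectors are eigenvectors''), whereas you do the direct linearity computation; these are the same argument.
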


\begin{proof}
By construction, $\mathbf{A}(t) = f(\mathbf{W}(t))$ is row stochastic, 
so $\mathbf{A}(t) \mathbf{1} = \mathbf{1}$. 
Thus, $\mathbf{1}$ is an eigenvector of $\mathbf{A}(t)$ with an eigenvalue of $\lambda = 1$. 
Because scalar multiples of eigenvectors are also eigenvectors, $\mathbf{x}_c = k \mathbf{1}$ is an eigenvector of $\mathbf{A}(t)$ with an eigenvalue $\lambda = 1$. 
So $\mathbf{A}(t) \mathbf{x}_c = \mathbf{x}_c$, 
and thus the consensus state, $\mathbf{x}_c$, is a fixed point of $\mathbf{x}(t+1) = \mathbf{A}(t) \mathbf{x}(t)$.
\end{proof}

Thus, by lemma \ref{lem:stability}, if the system reaches a consensus it will remain there until acted upon by external forces.

\subsection{Proof of convergence}

Based on the idea of a \emph{consensus graph} from Schmalz \cite{Schmalz2007, Schmalz2009}, let a \emph{consensus sequence}, $A_c$ be a sequence of adoption matrices, $\{\mathbf{A}(t_1), \mathbf{A}(t_2)... \mathbf{A}(t_{\tau})\}$ with $t_1 < t_2 < \cdots < t_\tau$ such that $\mathbf{x}_c = \mathbf{A}(t_{\tau}) \cdots \mathbf{A}(t_2) \mathbf{A}(t_1) \mathbf{x}(t_1)$.

\begin{lem}
\label{lem:spanning_tree}
If $G$ has a directed spanning tree and \emph{G is not a directed ring network}, then a consensus sequence, $A_c$, exists for the system associated with the communication network, $G$.
\end{lem}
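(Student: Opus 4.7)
My plan is to construct an explicit consensus sequence by propagating the root's initial state outward along the directed spanning tree. Let $r$ denote the root and enumerate the remaining nodes $v_1, v_2, \ldots, v_{n-1}$ in BFS order along the tree; write $p(v_i)$ for the tree parent of $v_i$, so that $(p(v_i), v_i) \in E$ and $p(v_i)$ strictly precedes $v_i$ in this ordering.

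For each $k = 1, \ldots, n-1$, I would specify a transmission matrix $\mathbf{W}(t_k)$ in which only $p(v_k)$ ticks, sending its state along the tree edge to $v_k$. Since each clock is allowed to tick $0$ or $1$ times per interval, this is a legal activation pattern, and since $v_k$ receives exactly one incoming signal, $f$ produces the adoption matrix $\mathbf{A}(t_k)$ that equals the identity outside row $v_k$ and has $a_{v_k,p(v_k)} = 1$. Each such $\mathbf{A}(t_k)$ is row-stochastic and therefore a legitimate element of a consensus sequence.

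I would then prove by induction on $k$ that after applying $\mathbf{A}(t_k)\cdots\mathbf{A}(t_1)$ to $\mathbf{x}(t_1)$, every node $v_i$ with $i \leq k$, together with the root $r$, carries the value $x_r(t_1)$. The base case is immediate. For the inductive step, the only coordinate altered is that of $v_k$, which adopts $p(v_k)$'s current state; since $p(v_k)$ precedes $v_k$ in the BFS order, the inductive hypothesis already gives $p(v_k)$ the value $x_r(t_1)$, so $v_k$ inherits it. Taking $\tau = n-1$, the resulting product sends $\mathbf{x}(t_1)$ to $x_r(t_1)\mathbf{1} = \mathbf{x}_c$, which is exactly what a consensus sequence must do.

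The step I expect to be the main obstacle is accounting for the explicit exclusion of the directed ring. In a ring, every node has in-degree exactly one, and the sequential construction above continues to work if the timing model permits schedules with a single active transmitter. However, if $f$ or the underlying model instead forces all feasible transmissions to occur together, then the only achievable adoption matrices are cyclic permutations, which preserve the multiset of state values and cannot reach consensus from a non-constant initial vector. I would therefore either verify that the model admits the single-transmission schedule constructed above --- in which case the ring exclusion is a technical convenience --- or handle the ring case separately, using the cyclic structure to show that no consensus sequence can exist.
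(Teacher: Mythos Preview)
Your approach is essentially the same as the paper's: both propagate the root's initial state down the directed spanning tree, one child at a time, until every node carries the root's value. The paper's argument is a terse two-sentence sketch, whereas you supply the explicit BFS ordering and induction; your treatment of the ring exclusion is also more careful than the paper's single line (which simply asserts that in a ring ``the last node would have no choice but to transmit its state value to the parent node,'' producing an infinite loop).
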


\begin{proof}
If $G$ has a directed spanning tree with a root $\omega \in V$, then a sequence of matrices can be constructed that pass down the state variable of $\omega$ to each child, and then from each child to each grandchild, and so on down the tree until all nodes possess the state value of $\omega$. If, however, $G$ is a ring network then the last node would have no choice but to transmit its state value to the parent node. This action would produce an infinite loop. Thus $G$ must not be a directed ring network.
\end{proof}

\begin{figure}
\centering
\includegraphics[width=80mm]{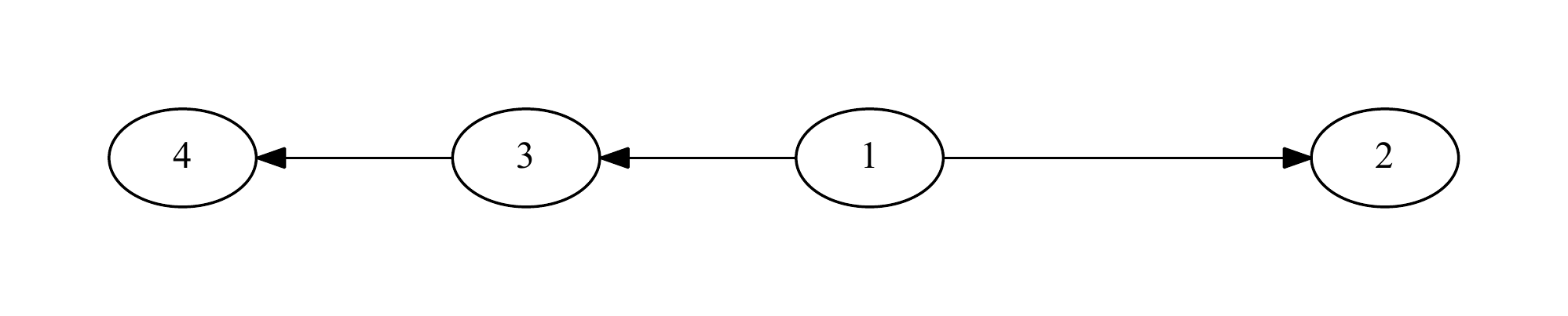}
\caption{The directed graph $G = (\{1, 2, 3, 4\}, \{(1, 2), (1, 3), (3, 4)\})$.}
\label{fig:spanning_tree}
\end{figure} 
Example: Consider the graph $G = (V, E)$ in figure \ref{fig:spanning_tree} where 
\begin{align*}
V & =\{1, 2, 3, 4\} \\
E & = \{(1, 2), (1, 3), (3, 4)\} \\
x(0) & = {[1, 2, 3, 4]}^T \\
\end{align*}

Because this graph contains a directed spanning tree and is not a directed ring network there is a consensus sequence, $A_c = \{\mathbf{A}(t_1), \mathbf{A}(t_2), \mathbf{A}(t_3)\}$, such that 
\begin{align*}
	\mathbf{A}(t_1) & = \begin{bmatrix} 1 & 0 & 0 & 0 \\ 1 & 0 & 0 & 0 \\ 0 & 0 & 1 & 0 \\ 0 & 0 & 0 & 1 \\ \end{bmatrix} \\
	\mathbf{A}(t_2) & = \begin{bmatrix} 1 & 0 & 0 & 0 \\ 0 & 1 & 0 & 0 \\ 1 & 0 & 0 & 0 \\ 0 & 0 & 0 & 1 \\ \end{bmatrix} \\
	\mathbf{A}(t_3) & = \begin{bmatrix} 1 & 0 & 0 & 0 \\ 0 & 1 & 0 & 0 \\ 0 & 0 & 1 & 0 \\ 0 & 0 & 1 & 0 \\ \end{bmatrix} \\
\end{align*}

\begin{thm}
\label{thm:consensus}
If $G$ has a directed spanning tree, is not a directed ring network, and $\mathbf{A}(t) = f(\mathbf{W}(t))$ is row stochastic, then the probability of constructing a consensus sequence, $A_c$, through random selection of transmission matrices tends to 1 as the length of time tends to infinity.
\end{thm}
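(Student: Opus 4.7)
The plan is to combine Lemma~\ref{lem:spanning_tree} with a standard independent-trials argument. Lemma~\ref{lem:spanning_tree} already guarantees the existence of at least one finite consensus sequence $A_c = \{\mathbf{A}(t_1), \dots, \mathbf{A}(t_\tau)\}$ whose product drives $\mathbf{x}(t_1)$ to $\mathbf{x}_c$. What remains is to show that, when transmission matrices are drawn i.i.d.\ over time, this particular sequence (or any other consensus sequence) is eventually encountered with probability approaching 1.

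First I would fix the explicit $A_c$ constructed in Lemma~\ref{lem:spanning_tree} and argue that each required matrix $\mathbf{A}(t_k)$ has strictly positive probability of being realized at any single tick. Because each $\mathbf{A}(t_k)$ is consistent with the topology of $G$, the random selection of $\mathbf{W}(t)$ assigns positive mass to at least one preimage under $f$; since $\mathbf{A}=f(\mathbf{W})$, the induced distribution on adoption matrices also gives positive mass to each $\mathbf{A}(t_k)$. Set $p_k = \Pr[\mathbf{A}(t) = \mathbf{A}(t_k)] > 0$ and $q = \prod_{k=1}^{\tau} p_k > 0$. By the i.i.d.\ assumption on the $\mathbf{W}(t)$'s, $q$ is precisely the probability that a fixed window of $\tau$ consecutive ticks produces $A_c$ in order.

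Next, I would partition the time horizon $[1, T]$ into $N = \lfloor T/\tau \rfloor$ disjoint windows of length $\tau$. Since these windows involve disjoint collections of i.i.d.\ random matrices, the events ``window $i$ realizes $A_c$'' are mutually independent and each has probability $q$. Hence the probability that none of the $N$ windows realizes $A_c$ is at most $(1-q)^{N}$, which tends to 0 as $T \to \infty$. This yields $\Pr[\text{some consensus sequence occurs in } [1,T]] \to 1$, which is the claim. A Borel--Cantelli flavored remark can be appended to conclude that, almost surely, a consensus sequence is eventually formed.

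The main obstacle I expect is the first step: rigorously establishing that each $\mathbf{A}(t_k)$ in the prescribed sequence has positive probability under the joint randomness of $\mathbf{W}$ and the deterministic reduction $f$. This requires that the distribution generating $\mathbf{W}(t)$ have sufficiently rich support, namely that every transmission pattern compatible with $G$ occurs with positive probability, and that $f$ not be adversarially designed to exclude the $\mathbf{A}(t_k)$ we need. Once this support condition is isolated and invoked, the rest of the argument is a routine geometric decay estimate.
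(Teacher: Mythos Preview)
Your proposal is correct and follows essentially the same independent-windows argument as the paper: fix a length-$\tau$ consensus sequence guaranteed by Lemma~\ref{lem:spanning_tree}, note it has positive probability $q>0$ of occurring in any given block of $\tau$ ticks, and bound the failure probability by $(1-q)^{\lfloor T/\tau\rfloor}\to 0$. Your discussion of the ``main obstacle'' is in fact more careful than the paper, which simply asserts $p>0$ without isolating the needed support condition on the distribution of $\mathbf{W}(t)$ and on $f$.
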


\begin{proof}
In accordance with previous work on directed gossip algorithms by Schmalz \cite{Schmalz2007, Schmalz2009},
Let $\Delta t$ be a finite time interval.
Let $p$ be the probability that a consensus sequence, $A_c$, occurs within $\Delta t$. Because $G$ has a directed spanning tree, we can set $\Delta t = |A_c|$, and so clearly $p > 0$. 
Let $p_c$ be the event where a consensus sequence, $A_c$, occurs within a time interval $T = r \Delta t$. 
Then,
\begin{align*}
& lim_{r \rightarrow \infty} P(p_c) + lim_{r \rightarrow \infty} P(1 - p_c)  = 1 \\
& lim_{r \rightarrow \infty} P(1 - p_c) = lim_{r \rightarrow \infty} (1 - p)^r   \\
& lim_{r \rightarrow \infty} (1 - p)^r = 0 \text{ since } p > 0                  \\
& lim_{r \rightarrow \infty} P(p_c)                                          = 1
\end{align*}
\end{proof}

\begin{cor}
\label{cor:consensus}
If the communication network, $G$, associated with $\mathbf{x}(t+1) = f(\mathbf{W}(t)) \mathbf{x}(t)$ has a directed spanning tree and is not a ring network, then the system is guaranteed to converge to a consensus, $\mathbf{x}_c$.
\end{cor}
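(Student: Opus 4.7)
The plan is to derive the corollary as a direct consequence of Theorem \ref{thm:consensus} combined with Lemma \ref{lem:stability}, using the fact that $f$ is defined in Section IV to produce a row stochastic adoption matrix $\mathbf{A}(t)$ by construction. Since the row stochasticity of $\mathbf{A}(t)$ is built into the definition of the conflict resolution algorithm, the third hypothesis of Theorem \ref{thm:consensus} is automatically satisfied whenever we write $\mathbf{x}(t+1) = f(\mathbf{W}(t)) \mathbf{x}(t)$, and so invoking the theorem under only the spanning tree and non-ring assumptions is legitimate.

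First I would apply Theorem \ref{thm:consensus} to conclude that the probability of a consensus sequence $A_c$ occurring within a time window $T = r \Delta t$ tends to $1$ as $r \to \infty$. By the definition of $A_c$ given just before Lemma \ref{lem:spanning_tree}, the occurrence of this sequence means that $\mathbf{x}(t_\tau + 1) = \mathbf{A}(t_\tau) \cdots \mathbf{A}(t_1) \mathbf{x}(t_1) = \mathbf{x}_c$, so with probability tending to $1$ the state vector reaches the consensus vector in finite time. Next I would invoke Lemma \ref{lem:stability} to argue that once $\mathbf{x}_c$ is attained it is preserved by every subsequent application of the update rule, since $\mathbf{A}(t) \mathbf{x}_c = \mathbf{x}_c$ for any row stochastic $\mathbf{A}(t)$. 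Combining these two observations gives that the system is guaranteed, in the probability-$1$ sense, to reach and remain at $\mathbf{x}_c$.

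The main obstacle I anticipate is subtle rather than technical: the word \emph{guaranteed} in the statement must be interpreted as \emph{with probability one in the limit}, because Theorem \ref{thm:consensus} only provides an asymptotic probability. I would make this interpretation explicit in the proof so the reader does not mistake the claim for a deterministic finite-time guarantee. A secondary point worth noting, but not belaboring, is that the non-ring hypothesis is needed precisely because Lemma \ref{lem:spanning_tree} would otherwise fail to supply a finite consensus sequence to seed the Borel--Cantelli style argument used in Theorem \ref{thm:consensus}; citing Lemma \ref{lem:spanning_tree} here makes the chain of dependencies transparent.

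Because Lemma \ref{lem:stability}, Lemma \ref{lem:spanning_tree}, and Theorem \ref{thm:consensus} together already do all the heavy lifting, the proof itself should be only a few lines: state that the hypotheses of Theorem \ref{thm:consensus} are met, deduce that a consensus sequence occurs with probability approaching $1$, conclude that $\mathbf{x}(t) = \mathbf{x}_c$ at the end of such a sequence, and appeal to Lemma \ref{lem:stability} for permanence. No new calculation is required.
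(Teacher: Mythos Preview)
Your proposal is correct and follows essentially the same route as the paper, which proves the corollary in a single line by direct application of Lemma~\ref{lem:spanning_tree} and Theorem~\ref{thm:consensus}. Your version is more explicit---particularly in invoking Lemma~\ref{lem:stability} for permanence and in clarifying that ``guaranteed'' means probability~$1$ in the limit---but these are elaborations of the same argument rather than a different approach.
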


\begin{proof}
By direct application of lemma \ref{lem:spanning_tree} and theorem \ref{thm:consensus}.
\end{proof}

\subsection{An example conflict resolution algorithm: Proportional Selection}
Having shown that the formation of a stable consensus is not only possible, but guaranteed, we now introduce \emph{proportional selection} as an example of a conflict resolution algorithm. Proportional selection takes as input the transmission matrix, $\mathbf{W}$, where $w_{ij} = 1$ if node $j$ transmits to node $i$ and $0$ otherwise. It produces as output the row-stochastic adoption matrix $\mathbf{A}$, such that $a_{ij} = 1$ if node $i$ adopts the state of node $j$ and 0 otherwise, with the additional constraint that $\mathbf{A} \mathbf{1} = \mathbf{1}$.

Under proportional selection, the probability of adoption for each state value in the set of incoming transmissions is equal to the ratio of the number of times that state value occurs to the number of all received state values. For example, if a node receives the values $\{2, 2, 3\}$ then there is a $2/3$ chance of that node adopting the value $2$ and a $1/3$ chance of the node adopting the value $3$.

The proportional selection algorithm consists of two main steps. First initialize the adoption matrix $\mathbf{A} = 0$. Next, for each row, $i$, in $\mathbf{W}$, randomly select a single column, $j$, with a positive entry and then set $A_{ij} = 1$. If a row is composed of all $0$'s then set $A_{ii} = 1$ to denote that the node keeps its current value. This technique guarantees that each row only has a single $1$ and all other entries $0$, thus ensuring $\mathbf{A}$ is row stochastic.

\section{Prediction of the Consensus State and Expected Time to Consensus}

Given a distributed consensus algorithm with the state update equation $\mathbf{x}(t+1) = \mathbf{A}(t) \mathbf{x}(t)$ where $\mathbf{A}(t) = f(\mathbf{W}(t))$ is the adoption matrix at time $t$ and $\mathbf{x}(t)$ is the state vector at time $t$ taken from a finite discrete set of states $S$, it is possible to compute the probability of achieving each consensus state through the construction of a Markov chain; however, to use a Markov chain we must first construct the associated state space, $H$, and transition matrix, $\mathbf{M}$.

\subsection{Generating the Markov State Space}


For the state space of the Markov chain, $H$, we let each element be a vector $\begin{bmatrix}x_0 & x_1 & \cdots & x_n\end{bmatrix}^T$ where $x_i \in S$ is the state variable of the $i$th node in $G$. Under this construction, $H$ is then equal to the set of all possible permutations of $\begin{bmatrix}x_0 & x_1 & \cdots & x_n\end{bmatrix}^T$ as $x_i$ varies over $S$. Given the Markov state space, $H$, we define $\mathbf{z}$ to be a row vector where $0 \leq z_i \leq 1$ denotes the probability that the initial distribution of node values in $G$ is equal to the $i$th Markov state and $\sum_i{z_i} = 1$. 

For example, let $S = \{0, 1\}$, let $h_1 \in H = \begin{bmatrix}0 & 0 & 1\end{bmatrix}^T$ and let $h_5 \in H = \begin{bmatrix}1 & 0 & 1\end{bmatrix}^T$. If $\mathbf{z} = \begin{bmatrix}0 & 1 & 0 & 0 & 0 & 0 & 0 & 0\end{bmatrix}$ then $x_0 = 0$, $x_1 = 0$, and $x_2 = 1$ with probability $1$; however, if $z = \begin{bmatrix}0 & 0.5 & 0 & 0 & 0 & 0.5 & 0 & 0\end{bmatrix}$ then $x_1 = 0$, and $x_2 = 1$ with probability $1$ but $x_0 = 0$ with probability $0.5$ and $x_0 = 1$ with probability $0.5$. This particular definition thus defines $z$ as the starting distribution of the Markov chain. It is anticipated that for most practical applications, the initial distribution of state values will be deterministic, and thus $z_i = 1$ for some $i$; however, if the initial state of each node in $G$ is determined according to a uniform distribution, then $z_i = 1/|H|$ for all $i$. 

The primary application of $\mathbf{z}$ is to explore the distribution of node states in the network at time $t$ through the solution of $\mathbf{z} \mathbf{M}^t$, where $\mathbf{M}$ is the Markov transition matrix.



\subsection{Generating the Markov Transition Matrix}

Given that each state in the Markov chain represents the aggregate state of all nodes in the network, $G$, the transitions between states of the Markov chain represent the change in the distribution of node states. The probabilities of each transition are used to generate the Markov transition matrix, $\mathbf{M}$, where $m_{ij}$ is the probability that the network will transition from state $i$ to state $j$. The specific probabilities depend on not only the topology of the network, but also the conflict resolution algorithm being used. 

For the directed graph $G = (V, E)$, let $G'$ be the adjacency matrix of $G$ with $G'_{ij} = 1$ if node $i$ points to node $j$ and $0$ otherwise. The construction of $\mathbf{M}$ requires multiple stages and depends on $G$, $G'$ and the conflict resolution algorithm.

The first stage in construction of the transition matrix, $\mathbf{M}$, is to generate the state space of the Markov chain, $H$. This is accomplished by enumerating all permutations of the state values that can be taken by the nodes in $G$. For example, on the graph $K_2 = (\{1, 2\}, \{(1, 2), (2, 1)\})$ with $S = \{0, 1\}$, there are $4$ possible Markov states: $\begin{bmatrix}0 & 0\end{bmatrix}^T$, $\begin{bmatrix}0 & 1\end{bmatrix}^T$, $\begin{bmatrix}1 & 0\end{bmatrix}^T$, and $\begin{bmatrix}1 & 1\end{bmatrix}^T$. Every Markov state is represented by a node in the corresponding Markov chain's graph representation.

The second stage is to generate the set of valid transmission matrices, $\mathcal{T}$, for $G$. For every transmission matrix $\mathbf{T} \in \mathcal{T}$, $t_{ij} = 1$ if node $j$ transmits to node $i$ and $0$ otherwise. Because all transmission matrices must account for the graph topology, it must be the case that $t_{ij} = 0$ if $G'_{ij} = 0$. Furthermore, $t_{ii} = 0$ because it is forbidden for a node to transmit to itself. Finally, $\mathbf{T}$ must be column stochastic because we are only considering gossip protocols in which nodes transmit to a single neighbor. 

The third stage is to construct the set of all possible adoption matrices, $\mathcal{A}$. Once all valid transmission matrices are generated, it is guaranteed by construction that they are column stochastic, but not row stochastic. This results in a set of matrices which may represent multiple simultaneous receptions by one or more nodes. In order to resolve this phenomenon, the transmission matrices must be transformed into the row stochastic  adoption matrices through the application of the chosen conflict resolution algorithm. However, because the goal is to construct the Markov transition matrix $\mathbf{M}$, it is essential to construct every possible adoption matrix $\mathbf{A} \in \mathcal{A}$ such that $a_{ij} = 1$ if node $i$ adopts the state of node $j$ and $0$ otherwise; note that in an adoption matrix $a_{ii} = 1$ if node $i$ does not adopt any other node. As a result of this procedure, there is a high probability that duplicate adoption matrices will be generated. These duplicates must be eliminated. 


The fourth stage is to use the set of adoption matrices, $\mathcal{A}$ and the Markov state space, $H$, to generate the edges of the Markov chain's graph representation. For each Markov state $h \in H$, let $h'$ be the set of states reachable from $h$. Then the $i$th transition from $h$ is given by $h'_i = \mathcal{A}_i h$ where $1 \leq i \leq |\mathcal{A}|$. Thus the set of outgoing edges from each Markov state is given as $\{(h, h'_1)$, $(h, h'_2)$, $\cdots$, $(h, h'_{|\mathcal{A}|})\}$ for every $h \in H$. Once all edges have been determined, each of them is assigned a weight\footnote{This specific weight value is due to a uniform probability of transmission among nodes.} of $1/|\mathcal{A}|$. At this point, the Markov chain is represented by a multi-edge digraph. To transform it in to a simple digraph,  first sum the weights of duplicate edges\footnote{We can sum the weights because the probability of transmission independent.} and assign that value to a single edge; then remove all of the duplicates. Once this process is complete, the Markov chain is represented as a simple digraph with weighted edges that determine the transition probability from one state to another. 

The final stage is to specify $\mathbf{M}$. Following the construction of the Markov chain as a simple weighted digraph, $\mathbf{M}$ is then specified as follows. Let $v_i$ and $v_j$ be the $i$th and $j$th node in the Markov graph and $w$ be the weight of the edge $(v_i, v_j)$, then
\begin{equation}
  m_{ij} =
  \begin{cases}
  w & \text{if there is an edge from node } v_i \text{ to node } v_j \\
  0 & \text{otherwise} 
  \end{cases}
\end{equation}
Thus $\mathbf{M}$ is defined as a traditional state transmission matrix from Markov chain theory \cite{Grinstead2006} and each entry $m_{ij}$ represents the probability to transition from state $i$ to state $j$. The exact probability of each entry depends on both the topology of the underlying communication graph and the conflict resolution algorithm.

\subsection{Consequences of Markov Representation}

By representing gossip over a network as a Markov chain, we are able to clearly explain and predict the number of times the nodes of the network will be in a particular configuration, the probability that a particular consensus will be reached, and the expected time that will be required to reach a consensus.

\begin{lem}
\label{lem:abs_chain}
The Markov chain with state space $H$ and transition matrix $\mathbf{M}$ is an absorbing Markov chain.
\end{lem}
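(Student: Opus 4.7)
The plan is to verify the two defining properties of an absorbing Markov chain: (i) the chain contains at least one absorbing state, and (ii) from every state it is possible to reach an absorbing state in a finite number of steps with positive probability. Both properties should fall out of results already established earlier in the paper, namely Lemma~\ref{lem:stability} (stability of the consensus state under any row stochastic $\mathbf{A}(t)$) and Theorem~\ref{thm:consensus} together with Corollary~\ref{cor:consensus} (guaranteed convergence to some consensus under the spanning-tree/non-ring topology assumptions that are carried through this section).

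First I would identify the candidate absorbing states of $\mathbf{M}$ as precisely the consensus configurations $h_k = k\mathbf{1} \in H$, one for each $k \in S$. To show each such $h_k$ is absorbing, note that by construction every adoption matrix $\mathbf{A} \in \mathcal{A}$ is row stochastic, and by Lemma~\ref{lem:stability} this gives $\mathbf{A}\, h_k = h_k$ for every $\mathbf{A} \in \mathcal{A}$. Hence, in the construction of $\mathbf{M}$ from Section~V-B, every outgoing edge from the Markov node corresponding to $h_k$ is a self-loop, so the weights collapse onto a single self-loop of total weight $1$. Therefore $m_{kk}=1$ and $m_{kj}=0$ for $j\neq k$, which is exactly the definition of an absorbing state.

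Next I would verify reachability. For an arbitrary state $h \in H$, I need to produce a finite product $\mathbf{A}(t_\tau)\cdots \mathbf{A}(t_1) h$ equal to some $h_k$, with each $\mathbf{A}(t_i) \in \mathcal{A}$. This is exactly the content of Lemma~\ref{lem:spanning_tree}: because $G$ has a directed spanning tree and is not a ring, a consensus sequence $A_c$ of length $\tau = |A_c|$ exists carrying the root's value down to every node. Since each $\mathbf{A}(t_i)$ in that sequence lies in $\mathcal{A}$, and each edge in the Markov graph has weight $1/|\mathcal{A}|>0$, the product of transition probabilities along this path is strictly positive. Hence from $h$ the chain reaches $h_k$ in at most $\tau$ steps with positive probability.

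Combining the two items yields that $\mathbf{M}$ has absorbing states (the consensus configurations) and that every non-absorbing (transient) state reaches one of them in finite time with positive probability, which is the definition of an absorbing Markov chain. The only subtle point I anticipate is being explicit that the topological hypotheses on $G$ used in Lemma~\ref{lem:spanning_tree} and Theorem~\ref{thm:consensus} are still in force here; without them reachability can fail (e.g.\ on a directed ring), and the lemma would need to be restated with those hypotheses made explicit.
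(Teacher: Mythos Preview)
Your proposal is correct and follows essentially the same approach as the paper: the paper's proof invokes Lemma~\ref{lem:stability} to conclude that consensus states are absorbing and Theorem~\ref{thm:consensus} to conclude that a consensus is always reached, then declares $\mathbf{M}$ absorbing. Your version is simply a more explicit unpacking of the same two ingredients, with the minor difference that you appeal directly to Lemma~\ref{lem:spanning_tree} (existence of a finite consensus sequence) rather than Theorem~\ref{thm:consensus} for the reachability step; your remark that the spanning-tree/non-ring hypotheses on $G$ are tacitly assumed here is also well taken.
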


\begin{proof}
We have shown, by theorem \ref{thm:consensus}, that given a set of assumptions on G and the use of gossip algorithms described in this paper, a consensus will always be reached if given sufficient time. Furthermore, because the consensus states of G are fixed points, by lemma \ref{lem:stability}, the Markov chain described by $\mathbf{M}$ has one or more absorbing states. Thus, $\mathbf{M}$ is an absorbing Markov chain.
\end{proof}

Because $\mathbf{M}$ is an absorbing Markov chain, it can be rewritten in canonical form \cite{Kemeny1976, Grinstead2006} as
\begin{equation*}
\mathbf{M}' = \begin{bmatrix}\mathbf{Q} & \mathbf{R} \\ 0 & \mathbf{I}\end{bmatrix}
\end{equation*}
where $\mathbf{Q}$ is a submatrix that describes the probability to transition from one transient state to another and $\mathbf{R}$ is a submatrix that describes the probability to transition from a transient state to an absorbing state. This transformation is accomplished by reordering and swapping the rows and columns of $\mathbf{M}$. 

Using $\mathbf{M}'$, we can quickly verify that $\mathbf{M}$ is absorbing by checking to ensure that $\mathbf{Q}^t = 0$ as $t \rightarrow \infty$ \cite{Grinstead2006}. This works because if $\mathbf{M}'$ is absorbing, every transient state should eventually transition to an absorbing state as $t$ tends to infinity. As this occurs, the probability to transition from one transient state to another approaches 0.

As another consequence of $\mathbf{M}'$, the probability to transition from one network configuration, $i$, to another, $j$, after $t$ steps is given by
\begin{equation*}
\mathbf{M}'^t = \begin{bmatrix} \mathbf{Q}^t & \mathbf{R} + \mathbf{RQ} + \mathbf{RQ}^2 + \cdots + \mathbf{RQ}^t \\ 0 & \mathbf{I} \end{bmatrix} 
\end{equation*}

Furthermore, it is now possible to define the \emph{fundamental matrix} \cite{Kemeny1976, Grinstead2006}, $\mathbf{N}$, as $\mathbf{N} = \mathbf{I} + \mathbf{Q} + \mathbf{Q}^2 + \cdots + \mathbf{Q}^t$. If we allow $t$ to approach infinity, then this can be rewritten as $\mathbf{N} = (\mathbf{I} - \mathbf{Q})^{-1}$.

The fundamental matrix is important because it allows us to compute the number of times the nodes of the network will be in a particular configuration, the probability that a particular consensus will be reached, and the expected time that will be required to reach a consensus.




\subsubsection{Calculating the Expected Time in Each State}

\begin{cor}
\label{cor:abs_steps}
$\mathbf{N}$ represents the expected number of times the chain is in state $j$ given that it starts at state $i$.
\end{cor}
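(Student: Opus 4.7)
The plan is to prove this by a standard indicator-random-variable argument, then identify the resulting series with the fundamental matrix $\mathbf{N}$ defined just before the corollary. Let $X_t$ denote the state of the chain at time $t$, and for each transient state $j$ define the indicator $Y_t^{(j)} = 1$ if $X_t = j$ and $0$ otherwise. The total number of visits to state $j$ over the entire run of the chain is $V_j = \sum_{t=0}^{\infty} Y_t^{(j)}$, and the quantity we wish to evaluate is $E[V_j \mid X_0 = i]$ for transient $i$.

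First I would apply linearity of expectation (valid because the $Y_t^{(j)}$ are non-negative, so Tonelli lets the sum and expectation commute) to obtain
\begin{equation*}
E[V_j \mid X_0 = i] \;=\; \sum_{t=0}^{\infty} P(X_t = j \mid X_0 = i).
\end{equation*}
The next step is to argue that for transient states $i$ and $j$, this one-step-to-$t$-step probability is exactly $(\mathbf{Q}^t)_{ij}$. This follows from the canonical form $\mathbf{M}' = \bigl[\begin{smallmatrix}\mathbf{Q} & \mathbf{R} \\ 0 & \mathbf{I}\end{smallmatrix}\bigr]$: a straightforward block-matrix induction gives $\mathbf{M}'^t = \bigl[\begin{smallmatrix}\mathbf{Q}^t & * \\ 0 & \mathbf{I}\end{smallmatrix}\bigr]$, so the transient-to-transient block of the $t$-step transition matrix is $\mathbf{Q}^t$.

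Combining these two observations gives $E[V_j \mid X_0 = i] = \sum_{t=0}^{\infty} (\mathbf{Q}^t)_{ij}$, so it remains to show that $\sum_{t=0}^{\infty} \mathbf{Q}^t = (\mathbf{I} - \mathbf{Q})^{-1} = \mathbf{N}$. Here the absorbing property established in Lemma \ref{lem:abs_chain} enters essentially: since $\mathbf{Q}^t \to 0$ as $t \to \infty$, the spectral radius of $\mathbf{Q}$ is strictly less than $1$, and therefore the Neumann series $\sum_{t=0}^{\infty} \mathbf{Q}^t$ converges and equals $(\mathbf{I} - \mathbf{Q})^{-1}$. Equating the $(i,j)$ entries yields $N_{ij} = E[V_j \mid X_0 = i]$, as claimed.

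The main conceptual obstacle is justifying the identification $P(X_t = j \mid X_0 = i) = (\mathbf{Q}^t)_{ij}$ for transient $j$: one must be sure that the transient sub-block really does capture all sample paths ending at $j$ after $t$ steps, which is valid precisely because any path that has visited an absorbing state before time $t$ cannot return to $j$, so the contribution of the $\mathbf{R}$ block never feeds back into the transient coordinates. The remainder is routine: linearity of expectation and convergence of the Neumann series, both of which are standard consequences of the absorbing structure already verified above.
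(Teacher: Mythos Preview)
Your argument is correct and is precisely the standard indicator-variable/Neumann-series proof one finds in the references the paper cites. The paper itself does not give a self-contained proof here: it simply states that the result is ``a direct consequence of Lemma~\ref{lem:abs_chain}'' and defers to Kemeny--Snell and Grinstead--Snell. In other words, you have written out in full the textbook argument that the paper only cites, so there is no substantive difference in approach---you have just supplied the details the authors chose to omit.
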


\begin{proof}
A direct consequence of Lemma \ref{lem:abs_chain} \cite{Kemeny1976, Grinstead2006}.
\end{proof}

\subsubsection{Calculating the Distribution of Consensus States}

\begin{cor}
\label{cor:abs_prob}
Let $\mathbf{B} = \mathbf{NR}$, then $B_{ij}$ is the probability to be absorbed by the $j$th absorption state given that the initial state of the system is the $i$th transient state.
\end{cor}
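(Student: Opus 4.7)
The plan is to establish the corollary by invoking the same classical machinery of absorbing Markov chains used for Corollary \ref{cor:abs_steps}, but tracking absorption probabilities instead of visit counts. Because Lemma \ref{lem:abs_chain} has already established that the Markov chain is absorbing and Corollary \ref{cor:abs_steps} has already asserted $\mathbf{N} = (\mathbf{I} - \mathbf{Q})^{-1}$ as the fundamental matrix, the proof reduces to a short algebraic identification of the series $\mathbf{NR}$ with the total absorption probabilities, together with a citation to the standard references already in the bibliography.

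First, I would decompose the probability of absorption by the step at which absorption first occurs. If the chain starts in transient state $i$, then the probability of being in transient state $k$ after exactly $t$ steps is $(\mathbf{Q}^t)_{ik}$, and the conditional probability of being absorbed directly into absorbing state $j$ from $k$ on the next step is $R_{kj}$. By the Markov property, the joint probability of being absorbed into $j$ at step exactly $t+1$ from $i$ is $(\mathbf{Q}^t \mathbf{R})_{ij}$. Summing these disjoint events over $t = 0, 1, 2, \ldots$ gives the total probability of ever being absorbed into $j$ starting from $i$, namely $\sum_{t=0}^{\infty} (\mathbf{Q}^t \mathbf{R})_{ij}$.

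Second, I would justify interchanging the sum with the matrix product and collapse the geometric series: since $\mathbf{Q}^t \to 0$ as $t \to \infty$ (already noted in the paper as the absorbing-chain check), $\sum_{t=0}^{\infty} \mathbf{Q}^t$ converges to $\mathbf{N} = (\mathbf{I} - \mathbf{Q})^{-1}$, so $\sum_{t=0}^{\infty} \mathbf{Q}^t \mathbf{R} = \mathbf{N}\mathbf{R} = \mathbf{B}$. Hence $B_{ij}$ equals the probability of eventual absorption in the $j$th absorbing state starting from the $i$th transient state.

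There is essentially no hard step here; the only subtlety is verifying that the events indexed by $t$ are disjoint (absorption happens at a unique first step) and that convergence of $\mathbf{Q}^t$ to zero is genuine rather than merely asymptotic in a weaker sense. Both are standard and are handled in \cite{Kemeny1976, Grinstead2006}, so I would close the argument by explicitly citing the standard absorbing-chain theorem, mirroring the style of the preceding Corollary \ref{cor:abs_steps}.
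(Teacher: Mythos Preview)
Your proposal is correct and takes essentially the same approach as the paper: both treat the result as a direct consequence of Lemma~\ref{lem:abs_chain} and the standard absorbing-chain theory in \cite{Kemeny1976, Grinstead2006}. The paper's own proof is in fact just a one-line citation to those references together with Corollary~\ref{cor:abs_steps}, so your sketch of the geometric-series argument actually supplies more detail than the paper does.
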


\begin{proof}
A direct consequence of Lemma \ref{lem:abs_chain} and Corollary \ref{cor:abs_steps} \cite{Kemeny1976, Grinstead2006}.
\end{proof}

\subsubsection{Calculation of the Expected Time to Consensus}

\begin{cor}
\label{cor:exp_time}
Let $\mathbf{T}_A = \mathbf{N 1}$, then $\mathbf{T}_{A_i}$ is the expected number of steps (or matrix multiplications) until an absorbing state is reached when the system starts in the $i$th transient state.
\end{cor}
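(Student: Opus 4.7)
The plan is to reduce this to Corollary \ref{cor:abs_steps} via linearity of expectation. By Corollary \ref{cor:abs_steps}, the $(i,j)$ entry of $\mathbf{N}$ is the expected number of visits to transient state $j$ when the chain starts in transient state $i$. The key observation is that the total number of steps before absorption, starting from transient state $i$, is precisely the sum over all transient states $j$ of the number of times the chain visits $j$, since each step occurs in exactly one transient state (the chain leaves the transient class only upon absorption, which is the terminating event and contributes no further transient visits).

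First, I would introduce, for each transient state $j$, the random variable $V_{ij}$ counting the number of visits to $j$ starting from $i$, and let $T_i$ denote the number of steps before absorption. I would argue $T_i = \sum_{j \in \text{transient}} V_{ij}$ as an identity of random variables (each step spent in the transient class is counted exactly once on the right). Then, by linearity of expectation, $\mathbb{E}[T_i] = \sum_j \mathbb{E}[V_{ij}] = \sum_j N_{ij}$, which is exactly the $i$th entry of $\mathbf{N}\mathbf{1}$. Hence $(\mathbf{T}_A)_i = \mathbb{E}[T_i]$ as claimed.

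I would also note that finiteness of these expectations is guaranteed because $\mathbf{M}$ is absorbing (Lemma \ref{lem:abs_chain}), so $\mathbf{Q}^t \to 0$ and $\mathbf{N} = (\mathbf{I} - \mathbf{Q})^{-1}$ has finite nonnegative entries, legitimizing the interchange of expectation and sum. The main subtlety, and really the only point to be careful about, is the bookkeeping that establishes $T_i = \sum_j V_{ij}$ without double-counting or missing the terminating step; once that accounting is made explicit, the conclusion is immediate from Corollary \ref{cor:abs_steps} and standard properties of absorbing chains \cite{Kemeny1976, Grinstead2006}.
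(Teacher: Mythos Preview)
Your proposal is correct and follows essentially the same approach as the paper: the paper's proof simply records this as a direct consequence of Lemma~\ref{lem:abs_chain} and Corollary~\ref{cor:abs_steps} with citations to \cite{Kemeny1976, Grinstead2006}, and your argument via $T_i = \sum_j V_{ij}$ and linearity of expectation is exactly the standard derivation those references contain. In other words, you have filled in the details that the paper chose to leave to the cited texts.
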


\begin{proof}
A direct consequence of Lemma \ref{lem:abs_chain} and Corollary \ref{cor:abs_steps} \cite{Kemeny1976, Grinstead2006}.
\end{proof}

\subsubsection{Rough bounds on the Expected Time to Consensus}

It is also useful to be able to calculate the upper and lower bounds of the expected time to consensus.

Given the expected time to consensus, $\mathbf{t}_A$, the variance of the number of steps is 
\begin{equation*}
\sigma_{t_A}^2 = (2 \mathbf{N} - \mathbf{I}) \mathbf{t}_A - \mathbf{t}_A^2
\end{equation*}
where $\mathbf{t}_A^2$ is the column vector with $\mathbf{t}_{Ai}^2 = \mathbf{t}_{Ai} \mathbf{t}_{Ai}$ \cite{Kemeny1976}. Let X be a random variable representing the time to convergence, then by the Markov inequality $P(X \geq a) \leq \frac{\mathbf{t}_A}{a}$. If $a = k \sqrt{\sigma_{T_A}^2}$ then we can calculate the probability of a value being more than $k$ standard deviations away. If $a = \mathbf{t}_A + \delta$ and then we can calculate the probability that the expected consensus time is larger than some delta of itself. Finally, if $a = \frac{\mathbf{t}_A}{\epsilon}$, then the Markov inequality tells us that $P(X \geq \frac{\mathbf{t}_A}{\epsilon}) \leq \epsilon$. Using this last value for a, we can see that 95\% of the time consensus will be reached in less than $20 \mathbf{t}_A$ steps. Obviously these are very broad bounds and require computation of the Markov matrix, $\mathbf{M}$. We are currently investigating a method by which to bound the consensus time of the systems described in this paper without computing $\mathbf{M}$.


\subsection{A Simple Example}

Having described how to solve for the absorption probabilities, we now provide a concrete example by considering the completely connected 3 node graph, $K_3$, pictured in figure \ref{fig:example_k3} with $S = \{0, 1\}$ and proportional selection.
\begin{figure}
\centering
\includegraphics[width=40mm]{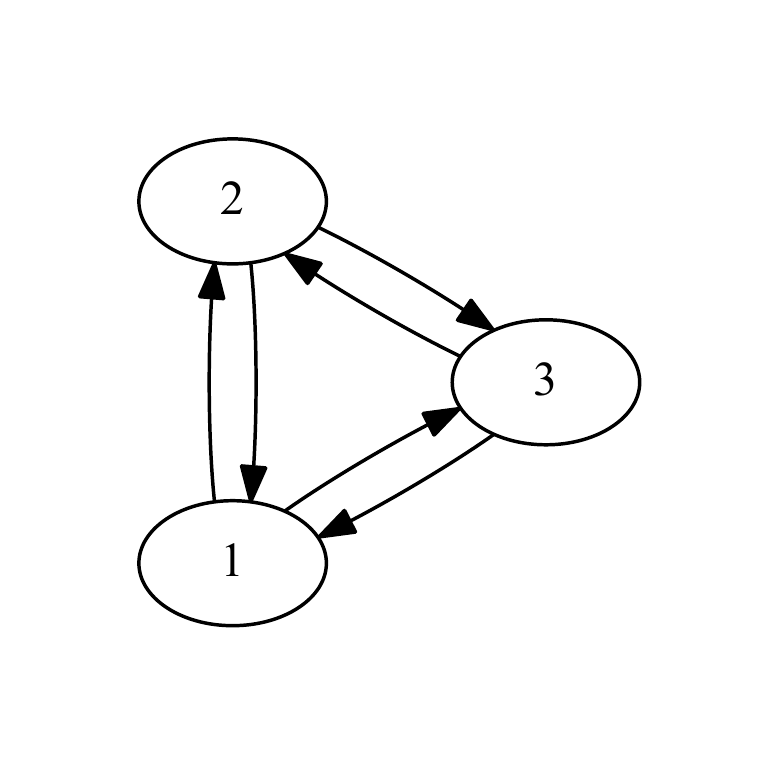}
\caption{The fully connected 3 node graph, $K_3$.}
\label{fig:example_k3}
\end{figure} 

\subsubsection{Step 1: Generate the state space}

There are $|S|^{|K_3|} = 8$ possible Markov states. For simplicity, let us order the Markov states by interpreting each one to be a binary number; thus if $z = \begin{bmatrix}0 & 1 & 0 & 0 & 0 & 0 & 0 & 0 & 0\end{bmatrix}^T$ then the initial state values for each node are given as $x_1 = 0$, $x_2 = 0$, and $x_3 = 1$ and we write the $2$nd Markov state as $\begin{bmatrix}0 & 0 & 1\end{bmatrix}^T$.


\subsubsection{Step 2: Generate the Markov chain}

Figure \ref{fig:example_k3mm} represents the Markov chain corresponding to this example (figure \ref{fig:example_k3}) as a graph with each node representing one possible state configuration for the network, and each edge weight representing the probability of transitioning from one state to another, as determined by the direction of the edge.
\begin{figure}
\centering
\includegraphics[width=80mm]{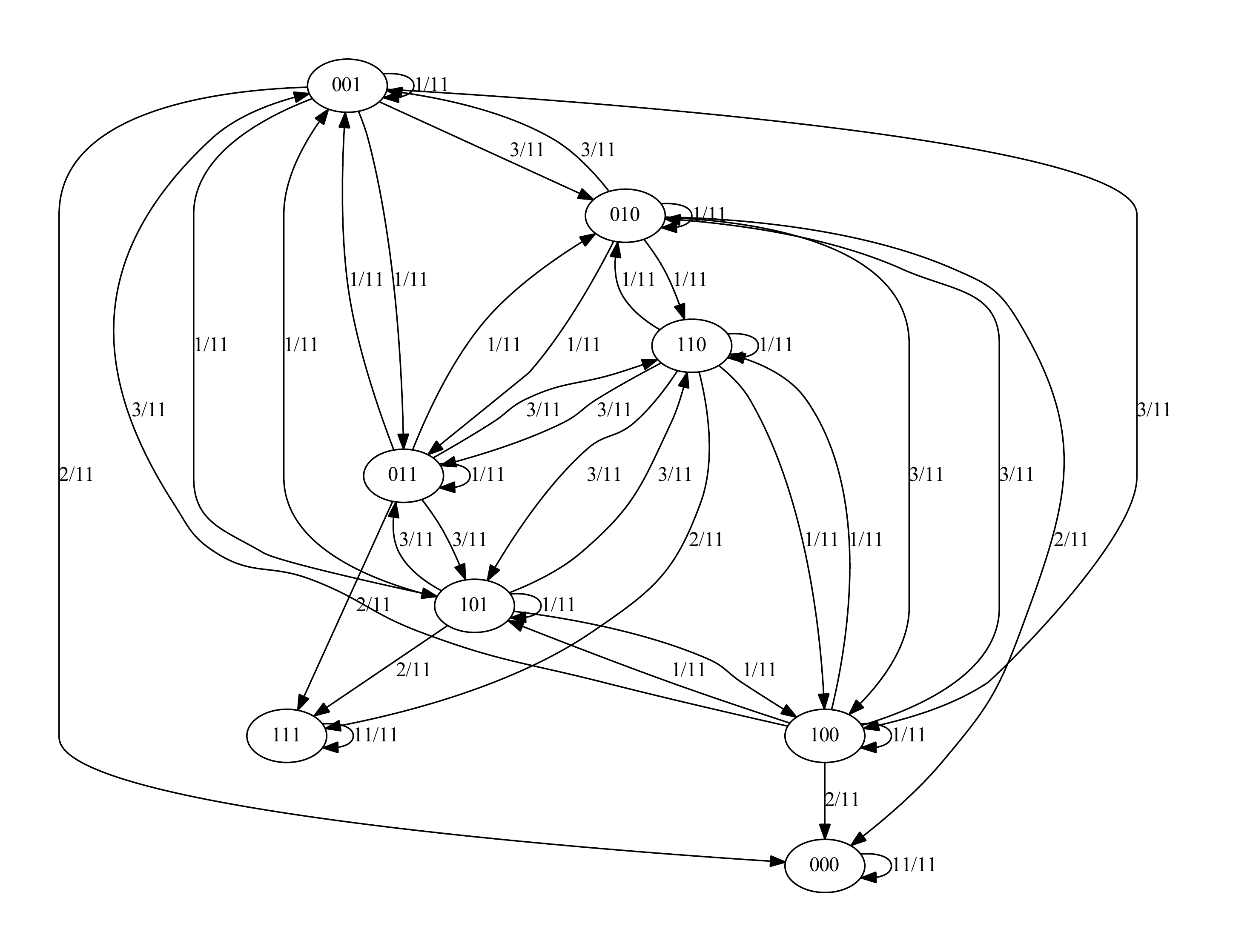}
\caption{A Markov chain for the state value distribution of $K_3$.}
\label{fig:example_k3mm}
\end{figure}
For this particular example it turns out that there are $11$ possible adoption matrices, and so each unaggregated edge is transversed with a probability of $1/11$ due to the use of proportional selection. Removing duplicate edges to consolidate the graph produces the edge weights observed in figure \ref{fig:example_k3mm}.

\subsubsection{Step 4: Generate the transition matrix}

The transition matrix, corresponding to the Markov chain represented in figure \ref{fig:example_k3mm} is given by
\[
\mathbf{M} =
\begin{bmatrix}
1.0 & 0 & 0 & 0 & 0 & 0 & 0 & 0 \\
0.18 & 0.09 & 0.27 & 0.09 & 0.27 & 0.09 & 0 & 0 \\
0.18 & 0.27 & 0.09 & 0.09 & 0.27 & 0 & 0.09 & 0 \\
0 & 0.09 & 0.09 & 0.09 & 0 & 0.27 & 0.27 & 0.18 \\
0.18 & 0.27 & 0.27 & 0 & 0.09 & 0.09 & 0.09 & 0 \\
0 & 0.09 & 0 & 0.27 & 0.09 & 0.09 & 0.27 & 0.18 \\
0 & 0 & 0.09 & 0.27 & 0.09 & 0.27 & 0.09 & 0.18 \\
0 & 0 & 0 & 0 & 0 & 0 & 0 & 1.0 \\
\end{bmatrix}
\]

\subsubsection{Step 5: Solve for the absorption probabilities and expected absorption time}

Placing $\mathbf{M}$ into canonical form, we find that
\[
\mathbf{M}' =
\begin{bmatrix}
0.09 & 0.27 & 0.09 & 0.27 & 0.09 & 0    & 0.18 & 0    \\
0.27 & 0.09 & 0.09 & 0.27 & 0    & 0.09 & 0.18 & 0    \\
0.09 & 0.09 & 0.09 & 0    & 0.27 & 0.27 & 0    & 0.18 \\
0.27 & 0.27 & 0    & 0.09 & 0.09 & 0.09 & 0.18 & 0    \\
0.09 & 0    & 0.27 & 0.09 & 0.09 & 0.27 & 0    & 0.18 \\
0    & 0.09 & 0.27 & 0.09 & 0.27 & 0.09 & 0    & 0.18 \\
0    & 0    & 0    & 0    & 0    & 0    & 1.0  & 0    \\
0    & 0    & 0    & 0    & 0    & 0    & 0    & 1.0  \\
\end{bmatrix}
\]

And so the expected number of steps spend in each state $j$ from the initial state $i$ is given by
\[
\mathbf{N} =
\begin{bmatrix}
    1.7897  &  0.9385  &  0.6329  &  0.9385  &  0.6329  &  0.5675 \\
    0.9385  &  1.7897  &  0.6329  &  0.9385  &  0.5675  &  0.6329 \\
    0.6329  &  0.6329  &  1.7897  &  0.5675  &  0.9385  &  0.9385 \\
    0.9385  &  0.9385  &  0.5675  &  1.7897  &  0.6329  &  0.6329 \\
    0.6329  &  0.5675  &  0.9385  &  0.6329  &  1.7897  &  0.9385 \\
    0.5675  &  0.6329  &  0.9385  &  0.6329  &  0.9385  &  1.7897 \\
\end{bmatrix}
\]

By corollary \ref{cor:abs_prob} the expected probability to reach a consensus on state $j$ given the initial transient state $i$ is given by
\[
\mathbf{B} = \mathbf{N R} = 
\begin{bmatrix}
    0.6667  &  0.3333 \\
    0.6667  &  0.3333 \\
    0.3333  &  0.6667 \\
    0.6667  &  0.3333 \\
    0.3333  &  0.6667 \\
    0.3333  &  0.6667 \\
\end{bmatrix}
\]

By corollary \ref{cor:exp_time} the expected number of steps required to reach a consensus on any of the absorbing states given the initial transient state $i$ is given by
\[
\mathbf{t}_A = \mathbf{N 1} = 
\begin{bmatrix}
    5.5000 \\
    5.5000 \\
    5.5000 \\
    5.5000 \\
    5.5000 \\
    5.5000 \\
\end{bmatrix}
\]

So, based on these results we expect that regardless of the initial distribution of node states it will take on average $5.5$ steps to reach a consensus; but the specific consensus reached \emph{will} depending on the initial distribution of node states.

\section{Theoretical Validation via Numerical Simulation}

It is important that any theoretical framework be validated against empirical, observed, or historical data. We choose to use a simple randomized numerical simulation for this task. We compare our theoretical predictions to empirical data for the consensus probabilities, $\mathbf{B}$, and convergence time, $\mathbf{t}_A$, for a network with a single root node (figure \ref{fig:experiment1}), the completely connected 4 node network, $K_4$ (figure \ref{fig:experiment2}), and the random network in figure \ref{fig:experiment3}.
\begin{figure}
\subfloat[]{\includegraphics[width=40mm]{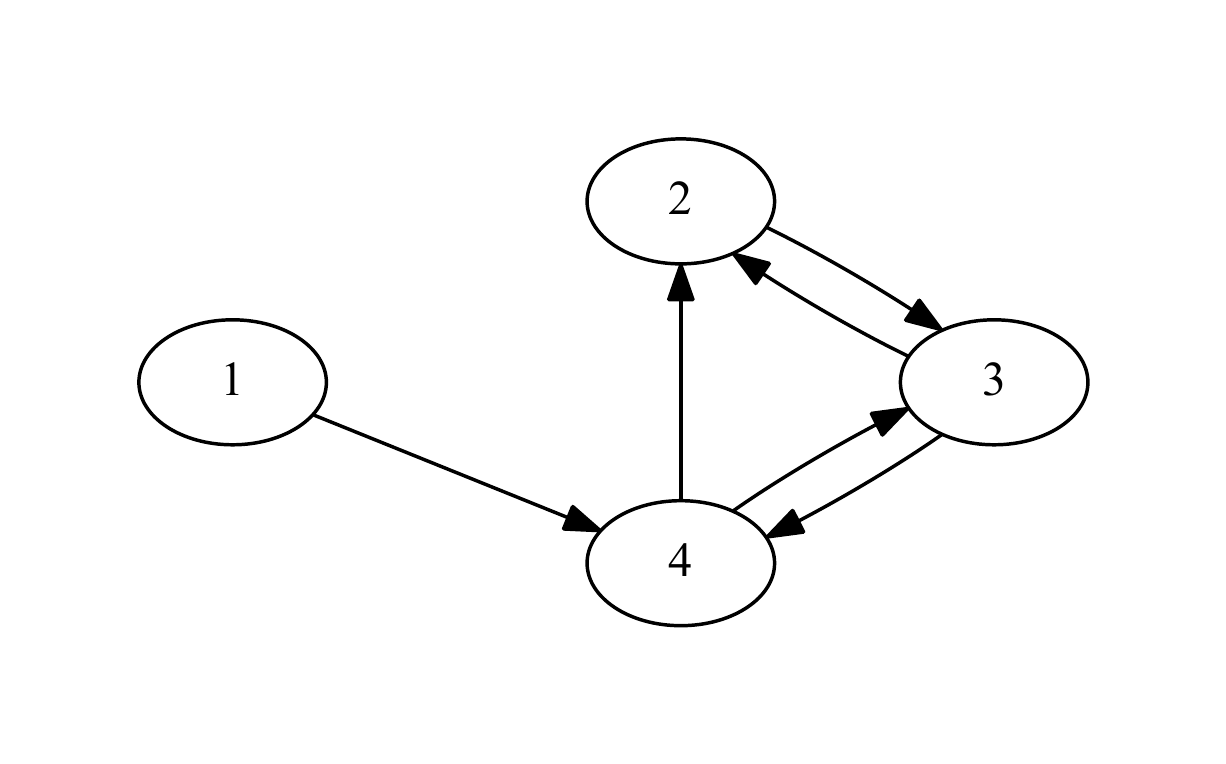} \label{fig:experiment1}}
\hfill
\subfloat[]{\includegraphics[width=40mm]{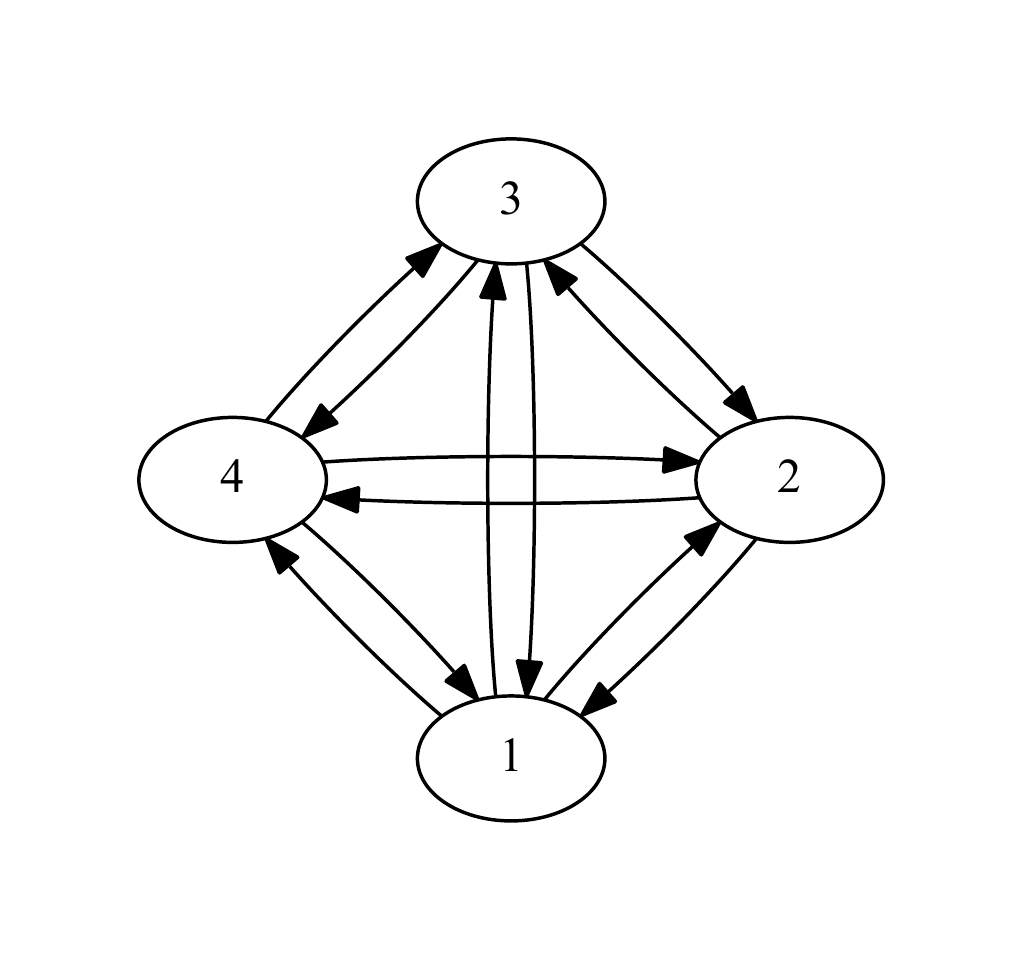} \label{fig:experiment2}}
\hfill
\centerline{\subfloat[]{\includegraphics[width=40mm]{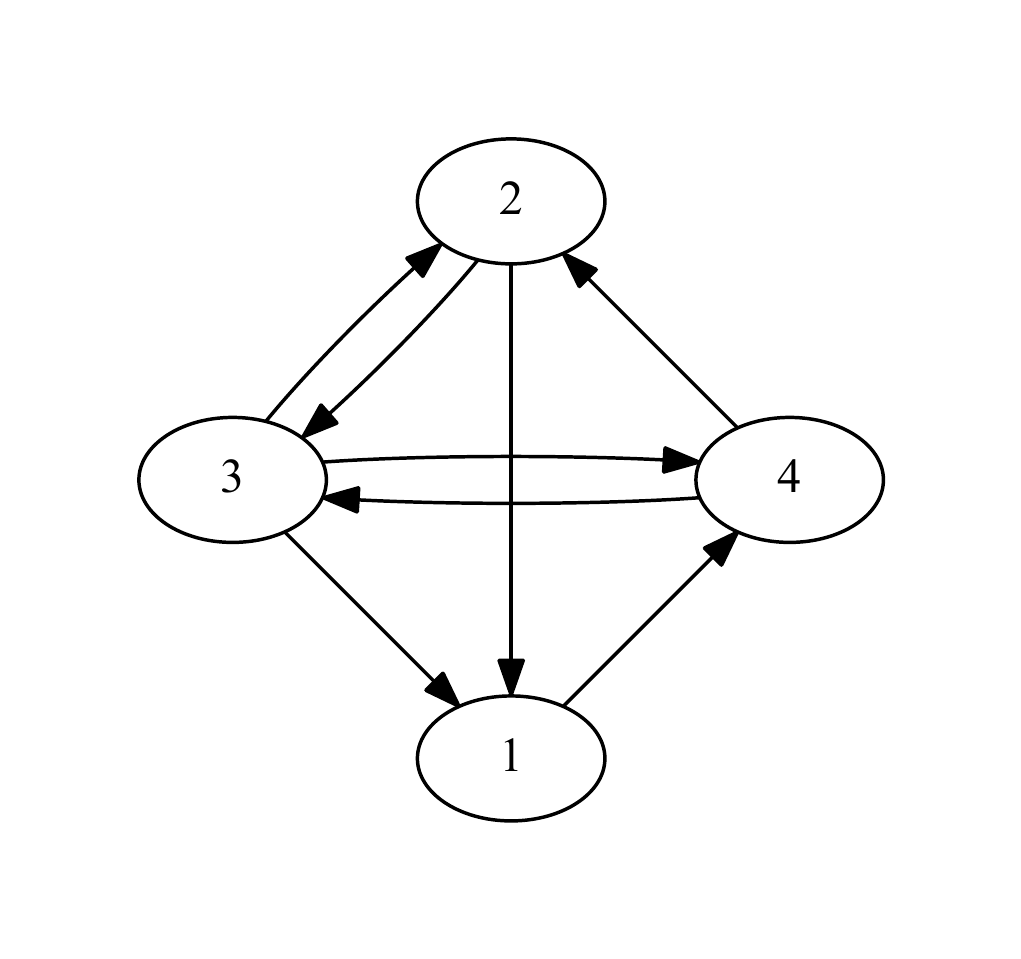} \label{fig:experiment3}}}
\caption{The following networks are used to validate our Markov-based analysis framework: (a) a rooted network, (b) the complete network on 4 nodes, and (c) a random network.}
\label{fig:validation_networks}
\end{figure} 
For each network there two possible states per node and proportional selection is used as the conflict resolution algorithm. We examine the consensus behavior that results from initializing the network with every possible non-consensus network state. Each empirical data point is the mean over 1,000 replications. 

We will claim that our theory is valid if the theoretical predictions and empirical data are approximately equal, such that the empirical data for the consensus probability is within 5\% of the theoretical value and the mean empirical consensus time for all initial states is statistically equal to the corresponding expected theoretical consensus time according to Student's T Test with $\alpha = 0.05$. Due to the nature of pseudo-random numbers we do not expect our theoretical and empirical results to be exactly equal to each other.

To simplify notation, we will favor writing the states of the Markov chain as strings of digits, where the $i$th digit represents the value of the $i$th node, as opposed to using vector notation.

\subsection{Simulation Description}

We represent a system of gossiping agents as the linear dynamical system $\mathbf{x}(t+1) = \mathbf{A}(t) \mathbf{x}(t)$ where $\mathbf{x}$ is the state vector at time $t$ and $\mathbf{A}$ is the adoption matrix at time $t$. The adoption matrix is determined by the conflict resolution algorithm currently being used by the system and the agent's communication network, $G$.

At each time step the simulation executes the following operations. First, the transmission matrix, $\mathbf{T}$, is generated as a uniform random matrix. This construction is done column by column such that there is only a single $1$ in each column.  Next, in preparation for conflict resolution, any row of $\mathbf{T}$ that consists entirely of zeros is replaced by the corresponding unit vector $\mathbf{e}$ such that $e_i = 1$ for the $i$th row and $0$ otherwise. Once $\mathbf{T}$ has been fully generated, it is used to create the value matrix, $\mathbf{T} diag(\mathbf{x}(t))$. This value matrix is then used by the selected conflict resolution algorithm to generate the appropriate adoption matrix, $A$. Upon generation of $A$, the state vector is updated and time is incremented. The simulation comes to a halt when either a user-defined maximum time value is reached, or all values of the state vector $\mathbf{x}(t)$ are epsilon equal, as defined by the equation $max(\mathbf{x}(t)) - min(\mathbf{x}(t)) < \epsilon$.

While the simulation is running we collect the state of every agent at each time step. We also record the halting time. If the vector of states at the final time step are all equal, then the corresponding value is the consensus state of the system. By running a simulation for multiple replications, we are able to calculate the mean halting time as well as the upper and lower bounds for the 95\% confidence interval of the mean halting time. We are also able to use the count of each consensus state to determine the probability of converging to a specific consensus value when starting from a specific initial state. Likewise, we can approximate the convergence probability for a random initial state.

\subsection{Probability to Converge to a Specific Consensus State}

Table \ref{tab:experiment1a} shows the theoretical and empirical probability of consensus over a rooted network (figure \ref{fig:experiment1}), $K_4$ (figure \ref{fig:experiment2}), and a random network (figure \ref{fig:experiment3}). The column header \emph{State} refers to the state of each node in the network such, going left to right, that digit $i$ represents the value of the $i$th node; $\mathbf{x}_t = \mathbf{c}$ is the theoretically expected probability that the network will reach a consensus on state $c$; and $\mathbf{x}_e = \mathbf{c}$ is the empirical probability that the network will reach a consensus on state $c$, as determined by aggregating the simulation data. The \emph{Error} column displays the absolute error between the theoretical and empirical values. 
\begin{table*}
\caption{Theoretical and empirical consensus probabilities for a rooted network, $K_4$, and a random network.}
\label{tab:experiment1a}
\centering
\begin{tabular}{l | r r | r r | r ||  r r | r r | r || r r | r r | r}
\toprule
 & \multicolumn{5}{c||}{(a) Rooted Network (Fig. \ref{fig:experiment1})}
 & \multicolumn{5}{c||}{(b) $K_4$ (Fig. \ref{fig:experiment2})}
 & \multicolumn{5}{c}{(c) Random Network (Fig. \ref{fig:experiment3})} \\
\midrule
 & \multicolumn{2}{c |}{Theoretical}
 & \multicolumn{2}{c |}{Empirical}
 & Error
 & \multicolumn{2}{c |}{Theoretical}
 & \multicolumn{2}{c |}{Empirical}
 & Error
 & \multicolumn{2}{c |}{Theoretical}
 & \multicolumn{2}{c |}{Empirical} 
 & Error \\
\midrule
  State & 
 $\mathbf{x}_{t} = 1$ & $\mathbf{x}_{t} = 2$ & $\mathbf{x}_{e} = 1$ & $\mathbf{x}_{e} = 2$ & Error &
 $\mathbf{x}_t = 1$ & $\mathbf{x}_t = 2$ & $\mathbf{x}_e = 1$ & $\mathbf{x}_e = 2$ & Error &
 $\mathbf{x}_t = 1$ & $\mathbf{x}_t = 2$ & $\mathbf{x}_e = 1$ & $\mathbf{x}_e = 2$ & Error\\
\midrule
1112  &  1.00  &  0.00  &  1.00  &  0.00  &  0.00  &  0.75  &  0.25  &  0.74  &  0.26  &  0.01  &  0.83  &  0.17  &  0.80  &  0.20  &  0.03 \\
1121  &  1.00  &  0.00  &  1.00  &  0.00  &  0.00  &  0.75  &  0.25  &  0.74  &  0.26  &  0.01  &  0.80  &  0.20  &  0.81  &  0.19  &  0.01 \\
1122  &  1.00  &  0.00  &  1.00  &  0.00  &  0.00  &  0.50  &  0.50  &  0.50  &  0.50  &  0.00  &  0.63  &  0.37  &  0.67  &  0.34  &  0.04 \\
1211  &  1.00  &  0.00  &  1.00  &  0.00  &  0.00  &  0.75  &  0.25  &  0.75  &  0.25  &  0.00  &  0.66  &  0.34  &  0.68  &  0.32  &  0.02 \\
1212  &  1.00  &  0.00  &  1.00  &  0.00  &  0.00  &  0.50  &  0.50  &  0.52  &  0.48  &  0.02  &  0.49  &  0.51  &  0.47  &  0.53  &  0.02 \\
1221  &  1.00  &  0.00  &  1.00  &  0.00  &  0.00  &  0.50  &  0.50  &  0.52  &  0.48  &  0.02  &  0.46  &  0.54  &  0.50  &  0.50  &  0.04 \\
1222  &  1.00  &  0.00  &  1.00  &  0.00  &  0.00  &  0.25  &  0.75  &  0.26  &  0.74  &  0.01  &  0.29  &  0.71  &  0.30  &  0.70  &  0.01 \\
2111  &  0.00  &  1.00  &  0.00  &  1.00  &  0.00  &  0.75  &  0.25  &  0.75  &  0.25  &  0.00  &  0.71  &  0.29  &  0.68  &  0.32  &  0.03 \\
2112  &  0.00  &  1.00  &  0.00  &  1.00  &  0.00  &  0.50  &  0.50  &  0.52  &  0.48  &  0.02  &  0.54  &  0.46  &  0.51  &  0.49  &  0.03 \\
2121  &  0.00  &  1.00  &  0.00  &  1.00  &  0.00  &  0.50  &  0.50  &  0.50  &  0.50  &  0.00  &  0.51  &  0.49  &  0.52  &  0.49  &  0.01 \\
2122  &  0.00  &  1.00  &  0.00  &  1.00  &  0.00  &  0.25  &  0.75  &  0.26  &  0.74  &  0.01  &  0.34  &  0.66  &  0.33  &  0.67  &  0.01 \\
2211  &  0.00  &  1.00  &  0.00  &  1.00  &  0.00  &  0.50  &  0.50  &  0.50  &  0.50  &  0.00  &  0.37  &  0.63  &  0.34  &  0.66  &  0.03 \\
2212  &  0.00  &  1.00  &  0.00  &  1.00  &  0.00  &  0.25  &  0.75  &  0.25  &  0.75  &  0.00  &  0.20  &  0.80  &  0.19  &  0.81  &  0.01 \\
2221  &  0.00  &  1.00  &  0.00  &  1.00  &  0.00  &  0.25  &  0.75  &  0.23  &  0.77  &  0.02  &  0.17  &  0.83  &  0.19  &  0.81  &  0.02 \\
\bottomrule
\end{tabular}
\end{table*}

The primary observation to be made from table \ref{tab:experiment1a} is the difference between every theoretical value and its corresponding empirical value is less than $0.05$. Thus, as per our criteria for validity, we claim that our theory correctly estimates the behavior of unconstrained gossip algorithms when conflict resolution is handled via proportional selection.  

Additional observations provide insight into how the topology of a network affects the probability to reach a consensus on a specific state. 
On the rooted network depicted in figure \ref{fig:experiment1}, the state of node $1$ (the left most node) determines the consensus of the system. For example, the network represented by state $1212$ converges to state $1111$; but if the network is initialized according to state $2211$ it will converge to state $2222$. 
On $K_4$ (figure \ref{fig:experiment2}), the probability to reach a consensus on a particular state appears to be related to the ratio of the individual node states in the initial state distribution. For example, the initial state $1121$ has a 75\% chance reach a consensus on state $1$ and a 25\% chance to reach a consensus on state $2$. This behavior appears to be unique to the proportional selection algorithm acting on a completely connected network and most likely arises as a result of the uniform initialization of states, the uniform selection of states during conflict resolution, and the uniform distribution of transmission probabilities due to the completely connected topology of the underlying network.
Data obtained from gossip over a simple random network (figure \ref{fig:experiment3} provides insight into the effects of gossiping over asymmetric and non-rooted networks. These results illustrate that some form of computation or analysis is required to determine the consensus probabilities for all but the most trivial networks.

\subsection{Time Required to Reach a Consensus}

Table \ref{tab:experiment1b} shows the theoretical and empirical values for the average time required to reach a consensus over a rooted network (figure \ref{fig:experiment1}), $K_4$ (figure \ref{fig:experiment2}), and a random network (figure \ref{fig:experiment3}). The column header $E[t]$ is the theoretically expected time till a consensus state is reached by the network; $\mu_{t}$ is the mean time until a consensus is reached as determined by simulation; $95\% CI \mu_{t}$ is the 95\% confidence interval of $\mu_{t}$; \emph{p-value} is the p-value for Student's T Test between the mean time of the empirical data and the expected consensus time. If the p-value is greater than 0.05 then the empirical consensus time is statistically equal to the theoretical consensus time.
\begin{table*}
\caption{Theoretical and empirical average consensus times for a rooted network, $K_4$, and a random network.}
\label{tab:experiment1b}
\centering
\begin{tabular}{l | c | c c c ||  c | c c c || c | c c c}
\toprule
 & \multicolumn{4}{c||}{(a) Rooted Network (Fig. \ref{fig:experiment1})}
 & \multicolumn{4}{c||}{(b) $K_4$ (Fig. \ref{fig:experiment2})}
 & \multicolumn{4}{c}{(c) Random Network (Fig. \ref{fig:experiment3})} \\
\midrule
 & \multicolumn{1}{c |}{Theoretical}
 & \multicolumn{3}{c ||}{Empirical}
 & \multicolumn{1}{c |}{Theoretical}
 & \multicolumn{3}{c ||}{Empirical}
 & \multicolumn{1}{c |}{Theoretical}
 & \multicolumn{3}{c}{Empirical} \\
\midrule
  State & 
 $E[t]$ & $\mu_{t}$ & $95\% CI \mu_{t}$ & p-value &
 $E[t]$ & $\mu_{t}$ & $95\% CI \mu_{t}$ & p-value & 
 $E[t]$ & $\mu_{t}$ & $95\% CI \mu_{t}$ & p-value \\
\midrule
1112  &  3.99  &  2.64  &  (2.40, 2.88)  &  0.00  &  6.13  &  6.66  &  (6.18, 7.14)  &  0.03  &  5.21  &  5.05  &  (4.66, 5.45)  &  0.43 \\
1121  &  3.99  &  2.52  &  (2.30, 2.74)  &  0.00  &  6.13  &  6.74  &  (6.27, 7.20)  &  0.01  &  5.82  &  4.70  &  (4.33, 5.07)  &  0.00 \\
1122  &  6.27  &  4.69  &  (4.44, 4.95)  &  0.00  &  7.73  &  9.34  &  (8.83, 9.85)  &  0.00  &  7.75  &  7.15  &  (6.73, 7.58)  &  0.01 \\
1211  &  5.98  &  3.52  &  (3.27, 3.77)  &  0.00  &  6.13  &  6.74  &  (6.26, 7.21)  &  0.01  &  7.33  &  6.43  &  (6.04, 6.83)  &  0.00 \\
1212  &  7.97  &  5.49  &  (5.23, 5.75)  &  0.00  &  7.73  &  9.01  &  (8.51, 9.51)  &  0.00  &  8.31  &  7.62  &  (7.21, 8.03)  &  0.00 \\
1221  &  7.96  &  5.65  &  (5.39, 5.90)  &  0.00  &  7.73  &  8.55  &  (8.10, 9.00)  &  0.00  &  8.13  &  7.56  &  (7.13, 7.99)  &  0.01 \\
1222  &  9.56  &  6.58  &  (6.33, 6.83)  &  0.00  &  6.13  &  7.02  &  (6.52, 7.52)  &  0.00  &  7.14  &  6.84  &  (6.42, 7.25)  &  0.15 \\
2111  &  9.56  &  6.56  &  (6.32, 6.81)  &  0.00  &  6.13  &  6.86  &  (6.36, 7.37)  &  0.00  &  7.14  &  6.87  &  (6.45, 7.29)  &  0.21 \\
2112  &  7.96  &  5.57  &  (5.30, 5.84)  &  0.00  &  7.73  &  8.88  &  (8.40, 9.36)  &  0.00  &  8.13  &  7.13  &  (6.73, 7.53)  &  0.00 \\
2121  &  7.97  &  5.33  &  (5.07, 5.59)  &  0.00  &  7.73  &  8.91  &  (8.39, 9.43)  &  0.00  &  8.31  &  7.91  &  (7.44, 8.37)  &  0.09 \\
2122  &  5.98  &  3.61  &  (3.37, 3.85)  &  0.00  &  6.13  &  7.32  &  (6.77, 7.87)  &  0.00  &  7.33  &  6.64  &  (6.24, 7.05)  &  0.00 \\
2211  &  6.27  &  4.51  &  (4.26, 4.76)  &  0.00  &  7.73  &  9.18  &  (8.66, 9.69)  &  0.00  &  7.75  &  7.29  &  (6.87, 7.70)  &  0.03 \\
2212  &  3.99  &  2.83  &  (2.59, 3.08)  &  0.00  &  6.13  &  7.22  &  (6.73, 7.72)  &  0.00  &  5.82  &  4.79  &  (4.37, 5.20)  &  0.00 \\
2221  &  3.99  &  2.58  &  (2.36, 2.80)  &  0.00  &  6.13  &  6.65  &  (6.19, 7.11)  &  0.03  &  5.21  &  5.17  &  (4.77, 5.58)  &  0.85 \\
\bottomrule
\end{tabular}
\end{table*}

The primary observation to be made from table \ref{tab:experiment1b} is that there is inconsistency in the statistical equality between the empirical consensus time and the theoretical consensus time. In the case of the rooted network, the empirical time is statistically less than the theoretical time; indicating that our simulation converges to a consensus faster than expected value. In the case of the $K_4$ network, the empirical time is statistically greater than the theoretical time; indicating that our simulation converges slower than the expected value. Finally, in the case of the random network, the empirical and theoretical times are statistically equal for some initial states and not for others. These results appear to be caused by our use of pseudorandom numbers in the simulation. Because the networks tested are small (only 4 nodes), the distribution of node states is not very ``random''. As the number of nodes in the network increases, however, the mean empirical consensus time approaches the theoretical expected consensus time and the p-values increase in response. 
\begin{table}
\caption{Growth in the statistical equality of consensus times over complete networks with proportional selection. }
\label{tab:validation1c}
\begin{center}
\begin{tabular}{l l l}
\toprule
4 nodes        & 5 nodes          & 6 nodes \\
\midrule
(0.00, 5.00)\% & (56.54, 90.13)\% & (82.75, 97.89)\% \\   
\bottomrule
\end{tabular}
\end{center}
\end{table}	
An example of this behavior can be observed in table \ref{tab:validation1c}, where each column represents the 95\% confidence interval for the mean percentage of initial states that converge to a consensus with a mean empirical speed that is statistically equal to the expected theoretical consensus time. These values are based on 1000 replications of the simulation. Based on our findings, we assert that our theory is valid with the footnote that expected times should only be taken as a rough approximation in the case of very small networks.



Further observation of the data in table \ref{tab:experiment1b} reveals that across all networks examined, multiple states within each sample network require the same amount of time to reach a consensus. For example, under the rooted network depicted in figure \ref{fig:experiment1} the states 1112, 1121, 2212, and 2221 all have the same expected consensus time. This observation suggests that the initial state of a network can have a serious impact on the time required to reach a consensus and begs the question, ``what do these initial states have in common?'' We introduce a notion of \emph{distance} between initial states and consensus in order to quantify the commonalities between initial states and provide one answer this question.

\subsection{Distance to Consensus}


We have shown that when information is exchanged over a network through unconstrained gossip it is possible that multiple initial states may result in the same level of performance. To explain why this might be the case, we postulate that states with similar consensus times are also of a similar distance to consensus. In this context, ``distance to consensus'' refers to the distance from a specific initial network state to \emph{any} consensus state. 

\begin{mdef}
Given the network $G = (V, E)$, let $h \in H$ be a distribution of node states in the network and $c$ be a specific consensus state of the network. Both $h$ and $c$ are vectors with the $i$th element representing the state of the $i$th node in the network. We define the distance between $h$ and $c$ as 
\[
D_H(h, c) = \sum\limits_{i=1}^{|V|}{\delta^{-1}(h_i, c_i)} \text{ where }
\delta^{-1}(h, c) = 
\begin{cases}
1, & \text{if }h \neq c \\
0, & \text{if }h = c
\end{cases}
\]
\end{mdef}

It is no coincidence that $\delta_H$ is essentially the hamming distance. 

\begin{mdef}
Given the network $G = (V, E)$, let $h \in H$ be the initial distribution of node states in the network and $C$ be the set of all possible consensus states; e.g. $C_1 \in C = \begin{bmatrix}1 & 1 & \cdots & 1 \end{bmatrix}$. Furthermore, let $P(c | h)$ be the probability that the network reaches a consensus on state $c$ given that it initialized in state $h$. We define the expected distance between $h$ and $C$ to be

\[
D(h, C) = \sum\limits_{c \in C}{P(c | h) * \delta_H(h, c)}
\]

\end{mdef}

The expected distance between $h$ and $C$ is the ``distance to consensus'' from $h$.

It is important to note that our definition of distance accounts for neither the number of possible node states nor the number of nodes in the network. As a consequence, we are able to compare distances across networks of varying topologies. For example, consider the two completely connected networks $K_4$ and $K_5$ that each appear to be one step a particular consensus state. Is the distance from state $h = \begin{bmatrix} 1 & 1 & 2 & 1\end{bmatrix}$ to consensus equal to the distance from $h' = \begin{bmatrix} 1 & 1 & 1 & 2 & 1\end{bmatrix}$? It turns out that the answer is no, they are not exactly equal. The state $h$ is 1.5 units from consensus, while the state $h'$ is 1.6 units because while both states are a single step from a consensus on state $1$, $K_4$ is three steps from a consensus on state $2$ while $K_5$ is 4 steps from a consensus on state 2. This result poses yet another question, ``when are two states close to one another in terms of the distance to consensus?'' To answer this question, we offer up the following definition.

\begin{mdef}
Two states, $h_1$ and $h_2$, are ``close'' to one another in terms of their distance to consensus if $round(h_1) = round(h_2)$; where $round(.) = $ is the elementary rounding function that rounds a real number to the nearest integer. 
\end{mdef}

This notion of two states being close to one another in terms of their distance to consensus is especially useful in the empirical investigation of various network properties on the time required to reach consensus. As we shall see, because the consensus time differs depending on the initial state, and the initial states can be partitioned based on their distance from consensus, we are able to reduce error by conducting analysis on the partitions of states as opposed to aggregating effects over all initial states. 

Furthermore, the ability to classify performance based on the distance to consensus implies that if the performance of one initial state in a class is known, the performance of all other states in that class will be similar. The exact computational requirements needed to classify each state will differ depending on the underlying network and the conflict resolution algorithm, but in the case of certain configurations, such as a completely connected network with proportional selection, it is trivial to determine if two states are in the same class\footnote{In the case of a completely connected network with proportional selection, the probability to converge to a particular consensus state is proportionally determined by the initial state (see table \ref{tab:experiment1a}). $D_H$ is always straightforward to compute.}.

\section{Influences on Consensus Time: The Effect of States, Nodes, and Density}

When applying absorbing Markov chains to analyze the behavior of unconstrained gossip algorithms, it becomes clear that the number of nodes in the network, the number of states each node can represent, the topology of the network, and the conflict resolution algorithm should all be determining factors in the length of time required to reach consensus. The number of nodes and and node states determine the size of the Markov chain. The network topology and conflict resolution algorithm determine the transition probabilities. 

Now that we have shown our analytical framework to be valid, we briefly explore the impact of the Markov chain state space size and communication network density on the consensus time of small networks operating under unconstrained gossip with proportional selection.

\subsection{Expectations}

It is our expectation that under proportional selection the time required to reach consensus will increase as the the number of nodes in the communication network and the number of possible states that each node can assume increases - regardless of the underlying topology. This expectation is based on the growth of the state space for the associated Markov chain. 

We also expect that, in general, the time required to reach a consensus will not necessarily decrease as the density of the underlying network increases. This expectation is due largely to the random nature of gossiping. Dense networks impose a larger number of choices for each node to make on average. In the worst case, this can result in an increase in the consensus time due to poor choice of transmission paths. In the best case, the optimal consensus sequence can be selected and result in the shortest consensus time.

\subsection{The Impact of Nodes and Node States on Consensus Time}

Given a network of $n$ nodes with $k$ possible states per node, each network state in the Markov chain can be represented as a vector of $n$ elements with each element taking on a value between $1$ and $k$, inclusive. Under this representation adding one more node to the network is equivalent to increasing the total number of states in the Markov chain from $k^n$ to $k^{n+1}$, and adding one more possible node state is equivalent to increasing the total number of states in the Markov chain from $k^n$ to $(k+1)^{n}$. Thus, it is reasonable to expect that as the number of nodes in a network and the number of possible states per node grow larger, adding one more node will produce many more Markov states than adding one more possible node state. This increase in the size of the Markov chain should have a direct impact on the time required to reach a consensus, with larger chains requiring more time as a result of the possible states that a network can end up in. Additionally, because the initial states of the network can be partitioned by their distance from consensus, we expect that the impact of adding nodes and node states will be more prevalent when the initial state of the network is far from consensus.

Examples of the consensus time behavior under bidirectional completely connected (figures \ref{fig:k3}, \ref{fig:k4}, and \ref{fig:k5}), star (figures \ref{fig:star3}, \ref{fig:star4}, and \ref{fig:star5}), and ring (figures \ref{fig:ring3}, \ref{fig:ring4}, and \ref{fig:ring5}) networks are displayed in tables \ref{tab:hyp1_1}, \ref{tab:hyp1_2}, and \ref{tab:hyp1_3}. These network structures were chosen for exploration over random and complex networks because they're easier to uniformly scale.
\begin{figure}
\subfloat[]{\includegraphics[width=25mm]{fig/k3.pdf} \label{fig:k3}}
\hfill
\subfloat[]{\includegraphics[width=25mm]{fig/k4.pdf} \label{fig:k4}}
\hfill
\subfloat[]{\includegraphics[width=25mm]{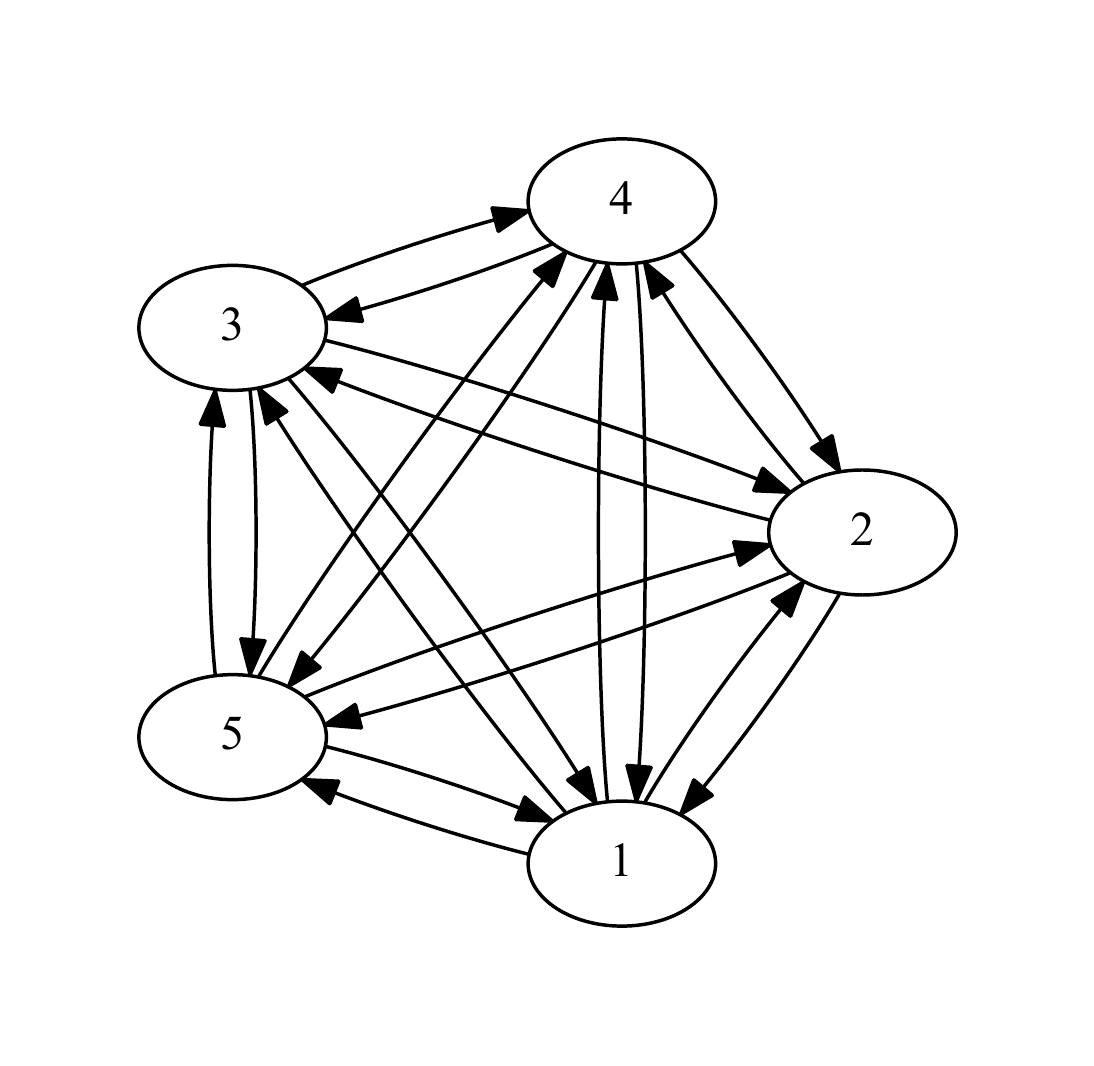} \label{fig:k5}}
\vfill
\subfloat[]{\includegraphics[width=25mm]{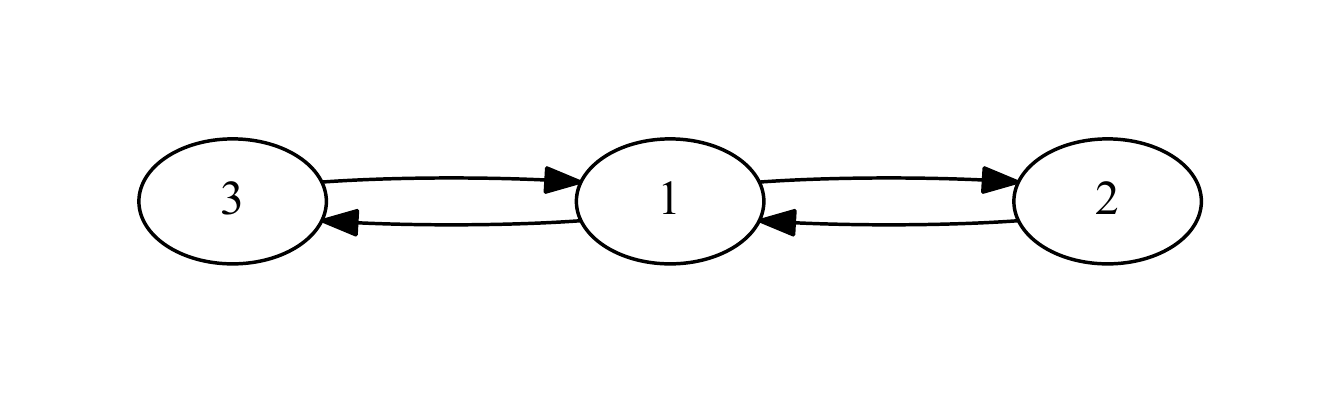} \label{fig:star3}}
\hfill
\subfloat[]{\includegraphics[width=25mm]{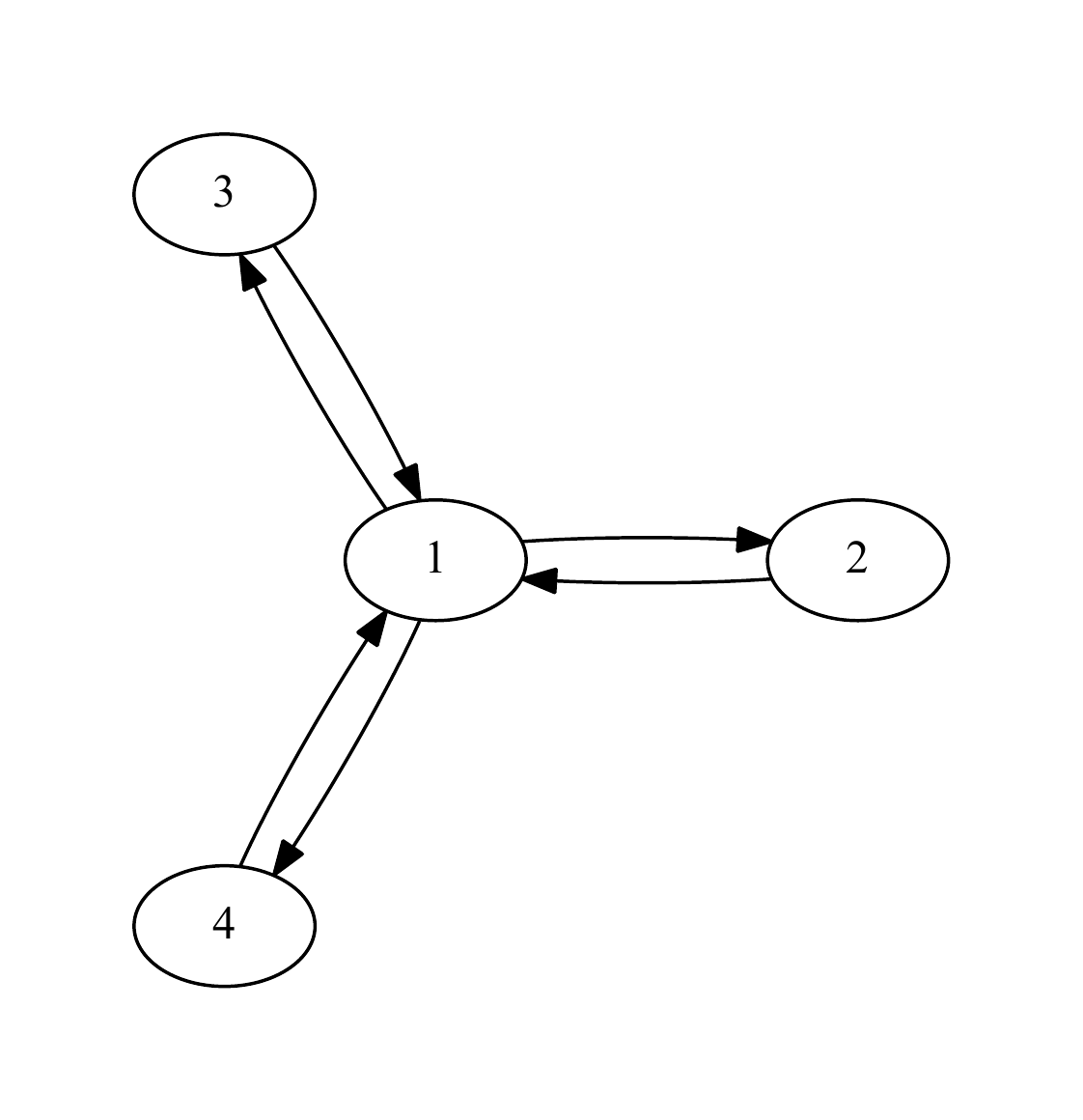} \label{fig:star4}}
\hfill
\subfloat[]{\includegraphics[width=25mm]{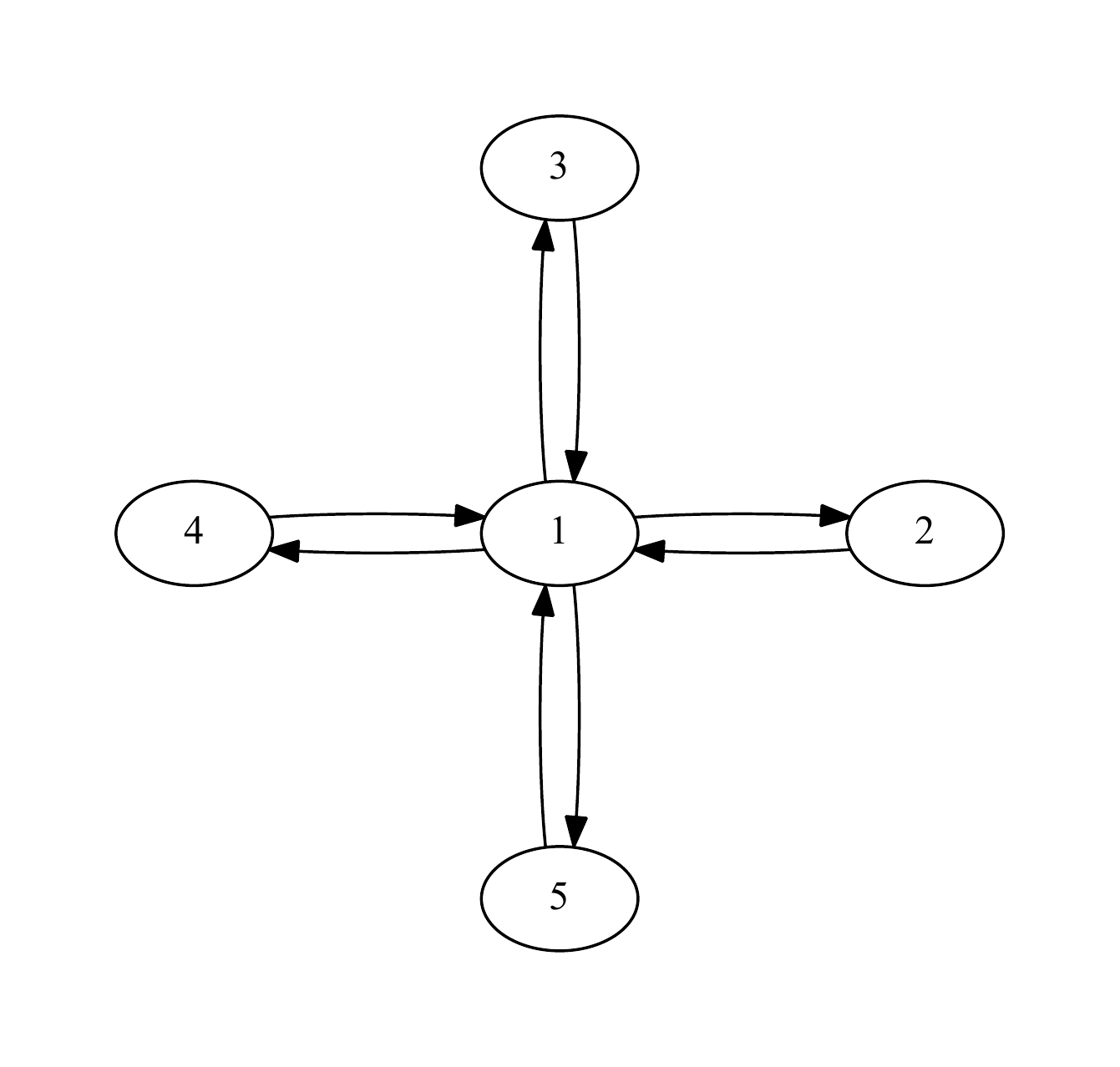} \label{fig:star5}}
\vfill
\subfloat[]{\includegraphics[width=25mm]{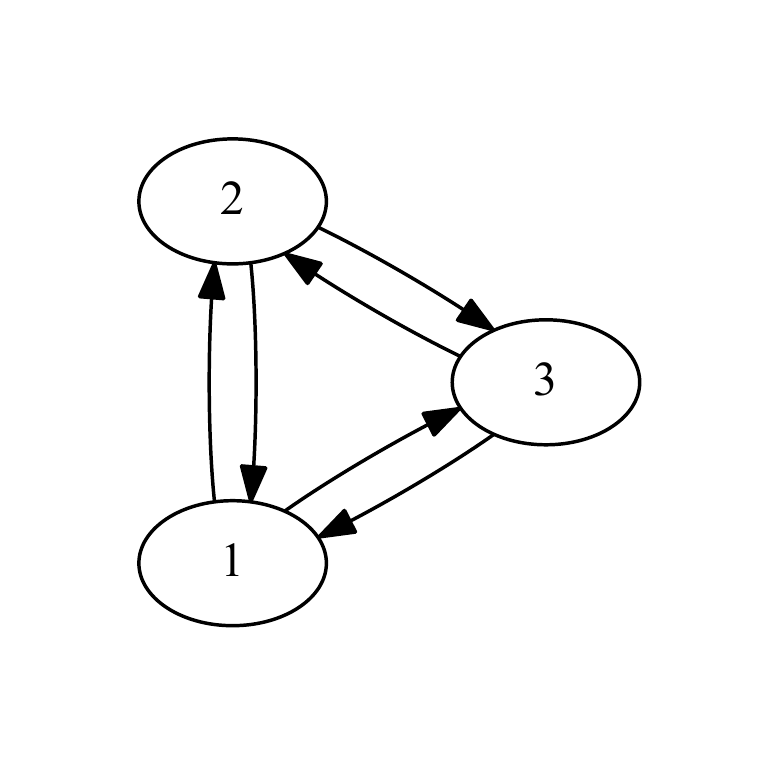} \label{fig:ring3}}
\hfill
\subfloat[]{\includegraphics[width=25mm]{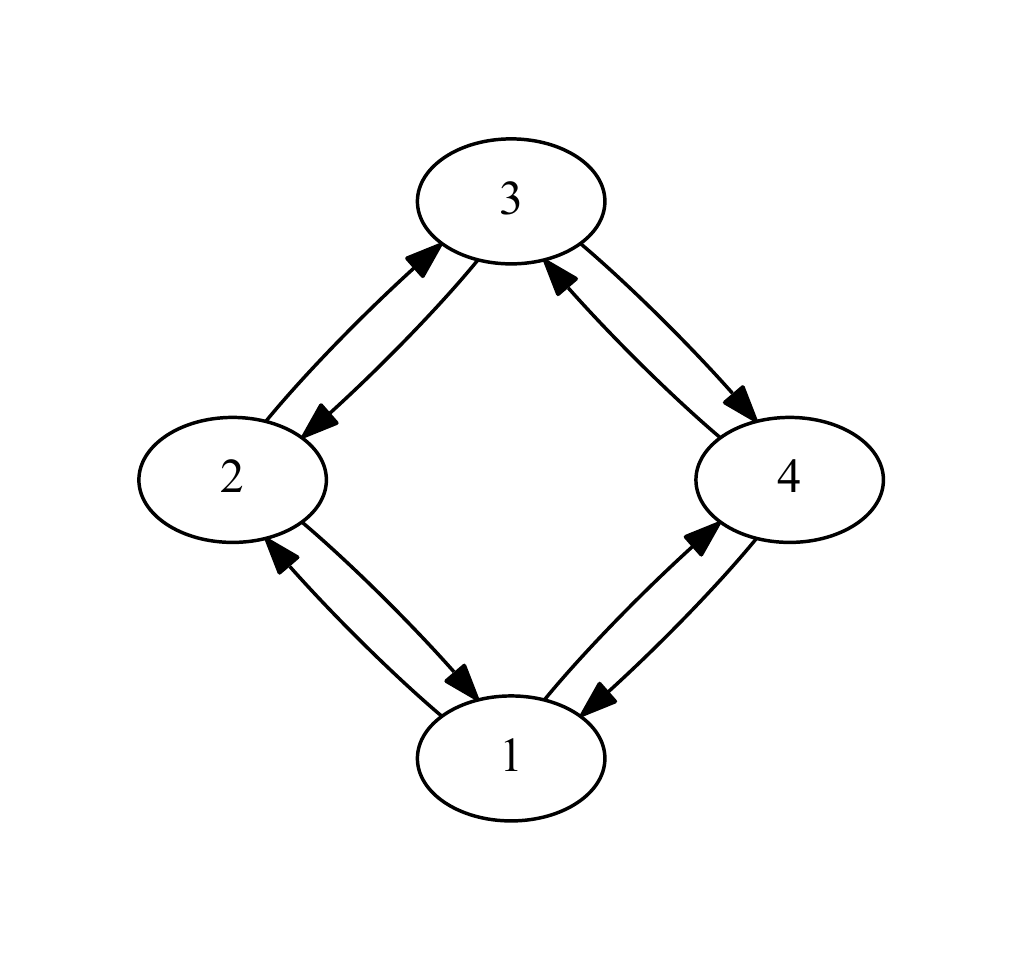} \label{fig:ring4}}
\hfill
\subfloat[]{\includegraphics[width=25mm]{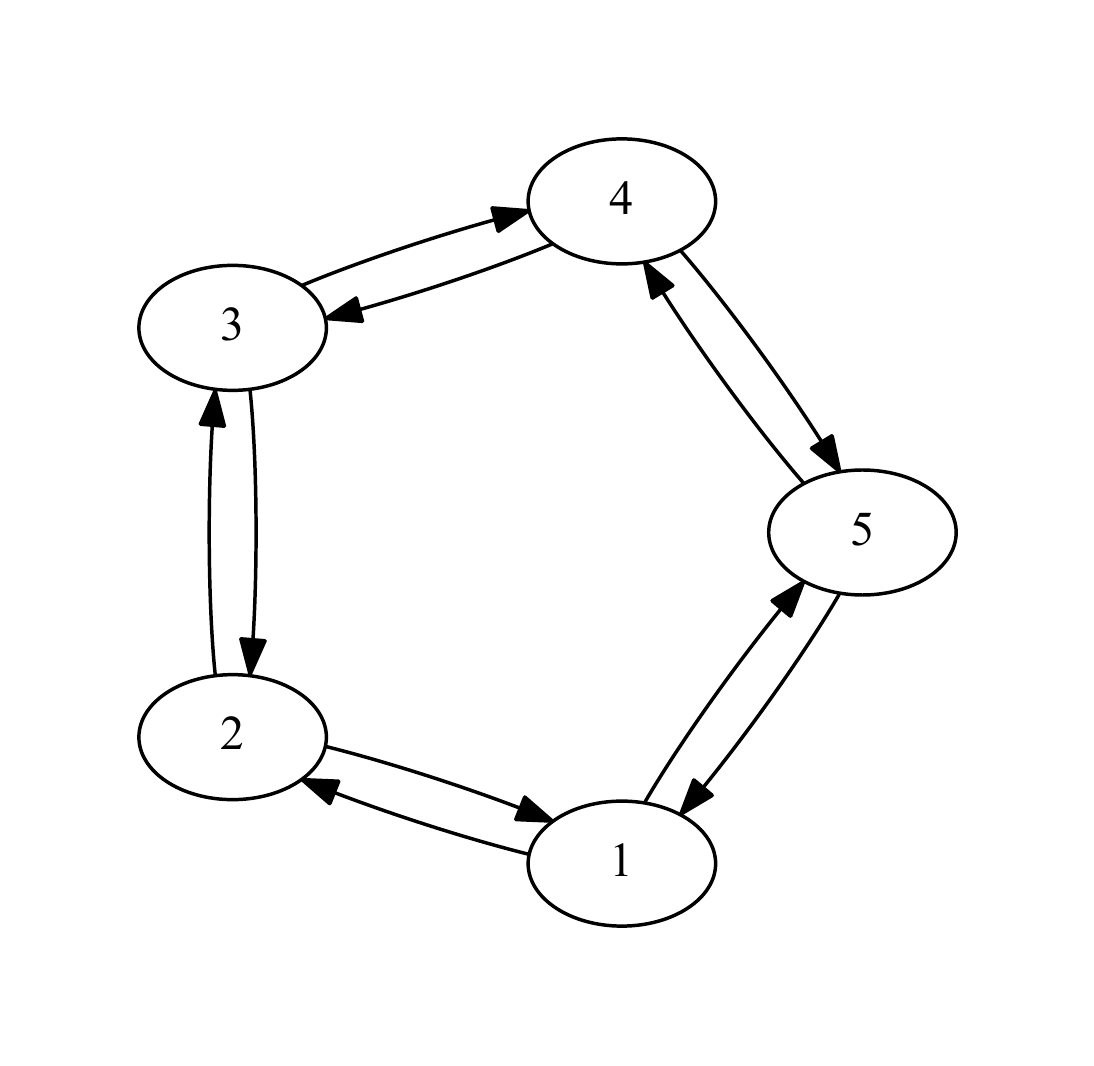} \label{fig:ring5}}
\caption{The complete (a, b, c), star (d, e, f), and ring (g, h, i) networks used to explore the impact of node and node state quantities on the consensus time of unconstrained gossip with proportional selection.}
\label{fig:test_networks}
\end{figure} 
Each table shows the 95\% confidence interval of the mean consensus time for all initial states at a particular distance from consensus at the specific noded/states configuration\footnote{This means we averaged the consensus time for all states $d$ units from consensus, as $d$ ranges from $1$ to the maximum observed distance.}. An increase in the number of nodes corresponds to moving down a column, from top to bottom. An increase in the number of node states corresponds to moving across a row, from left to right. A value of ``-'' indicates that there was no data for the corresponding network configuration. This partitioning is important because, as we will see, the further the network starts from consensus the larger the impact of our measured variables. 

Table \ref{tab:hyp1_1} displays the data for initial network states that start at a distance of 1 unit from consensus. 
\begin{table}
\caption{Theoretically determined 95\% confidence intervals for the mean consensus times at distance = 1.}
\label{tab:hyp1_1}
\begin{center}
\begin{tabular}{l | l | r r r r}
\toprule
 & & 2 states & 3 states & 4 states \\
\midrule
\multirow{3}{*}{$K_n$} 
& 3 nodes	&(5.5, 5.5)		&(5.5, 5.5)		&(5.5, 5.5)   \\
& 4 nodes	&(6.13, 6.13)	&(6.13, 6.13)	&(6.13, 6.13) \\
& 5 nodes	&(7.59, 7.59)	&(7.59, 7.59)	&(7.59, 7.59) \\
\midrule
\multirow{3}{*}{Star} 
& 3 nodes	& (4.79, 5.88) & (5.09, 5.57) & (5.18, 5.49)\\
& 4 nodes	& (7.11, 7.89) & (7.31, 7.69) & (7.38, 7.62) \\
& 5 nodes	& (10.01, 10.61) & (10.16, 10.46) & (10.21, 10.41) \\
\midrule
\multirow{3}{*}{Ring} 
& 3 nodes	&(5.50, 5.50) & (5.50, 5.50) & (5.50, 5.50) \\
& 3 nodes	&(5.89, 5.89) & (5.89, 5.89) & (5.89, 5.89) \\
& 3 nodes	&(7.31, 7.31) & (7.31, 7.31) & (7.31, 7.31) \\
\bottomrule
\end{tabular}
\end{center}
\end{table}
It can be observed that increasing the number of nodes in all networks results in a significant increase in the consensus time; however, increasing the number of states per node does not correspond to a significant increase in the consensus time. The response to increasing the number of node states is most likely occurs because at a distance of $1$, the network is already close to consensus; typically with only a single node out of sync. Because so few nodes require a state change, the number of states will have little to no impact on the time required to reach a consensus. 

Table \ref{tab:hyp1_2} displays the data for initial network states that start at a distance of 2 unit from consensus. 
\begin{table}
\caption{Theoretically determined 95\% confidence intervals for the mean consensus times at distance = 2}
\label{tab:hyp1_2}
\begin{center}
\begin{tabular}{l | l | r r r r}
\toprule
 & & 2 states & 3 states & 4 states \\
\midrule
\multirow{3}{*}{$K_n$} 
& 3 nodes	&-				&(7.33, 7.33)		&(7.33, 7.33) \\
& 4 nodes	&(7.73, 7.73)	&(8.34, 8.63)		&(8.57, 8.70) \\
& 5 nodes	&(10.56, 10.56)	&(10.9349, 11.1012)	&(11.13, 11.22)\\
\midrule
\multirow{3}{*}{Star} 
& 3 nodes	& - 			& (7.00, 7.00)   & (7.00, 7.00) \\
& 4 nodes	& (9.50, 9.50) 	& (10.24, 10.59) & (10.52, 10.68) \\
& 5 nodes	& (14.37, 14.47)& (14.94, 15.17) & (15.20, 15.33) \\
\midrule
\multirow{3}{*}{Ring} 
& 3 nodes	&- 				& (7.33, 7.33) 	& (7.33, 7.33) \\
& 4 nodes	&(7.20, 7.66) 	& (7.99, 8.27) 	& (8.21, 8.34) \\
& 5 nodes	&(10.00, 10.25) & (10.46, 10.63) & (10.64, 10.73) \\
\bottomrule
\end{tabular}
\end{center}
\end{table}	
It can be observed that increasing the number of nodes in all networks results in a significant increase in the consensus time; increasing the number of states per node results in a small, and questionably significant increase in the consensus time. The impact on the consensus time as a result of increasing the number of node states appears to grow in significance with the number of nodes. If the observed trends continue, then the impact should be clearly significant for large networks. This is in line with our expectations based on the growth of the Markov chain state space. 

Table \ref{tab:hyp1_3} displays the data for initial network states that start at a distance of 3 unit from consensus.
\begin{table}
\caption{Theoretically determined 95\% confidence intervals for the mean consensus times at distance = 3}
\label{tab:hyp1_3}
\begin{center}
\begin{tabular}{l | l | r r r r}
\toprule
 & & 2 states & 3 states & 4 states \\
\midrule
\multirow{3}{*}{$K_n$} 
& 3 nodes	&-	&-				&- \\
& 4 nodes	&-	&-				&(9.99, 9.99) \\
& 5 nodes	&-	&(12.70, 12.70)	&(13.03, 13.10)\\
\midrule
\multirow{3}{*}{Star} 
& 3 nodes	&-	&-				&- \\
& 4 nodes	&-	&-				&(12.25, 12.25) \\
& 5 nodes	&-	&(17.35, 17.36)	&(17.81, 17.91)\\
\midrule
\multirow{3}{*}{Ring} 
& 3 nodes	&- & - 				& - \\
& 3 nodes	&- & - 				& (9.54, 9.54) \\
& 3 nodes	&- & (12.07, 12.14) & (12.41, 12.48) \\
\bottomrule
\end{tabular}
\end{center}
\end{table}
As with tables \ref{tab:hyp1_1} and \ref{tab:hyp1_2} an increase in the number of nodes corresponds to a significant increase in the consensus time. Unlike tables \ref{tab:hyp1_1} and \ref{tab:hyp1_2}, however, an increase in the number of states per node also results in a significant increase in the consensus time. This observation supports our expectation that the distance between an initial state and consensus is important.

Overall, these observations paint an interesting picture of the dynamics behind the temporal behavior of unconstrained gossip. Across all three graphs that we examine, increasing the number of nodes in the network results in an increase in the consensus time. Increasing the number of states per node, however, has different effects depending on the distance of the initial network state. When the initial network state is close to consensus, the adding more possible states per node has little to no impact on the consensus time. As the distance between the initial network state and consensus increases, adding additional choices for each node states results in a more significant impact on the consensus time. 

In general, our results suggest that the consensus time increases with both the number of nodes in the network and the number of possible states per node. They also reflect our expectations of behavior as a result of Markov chain analysis. The Markov state space grows faster via the addition of nodes than it does via the addition of states per node. Under this perspective, one possible cause for the increase in consensus time is simply that it takes longer to reach a consensus because the state space is larger, and thus the likelihood of encountering the optimal consensus sequence decreases and the average path length to a consensus state increases.

\subsection{The Impact of Network Density on Consensus Time}

So far we have established that the consensus time of unconstrained gossip under proportional selection is largely influenced by the number of nodes in the underlying communication network. This finding is in line with the existing research on the average gossip algorithm and gossip algorithms that do not account for conflict among transmissions. We have also shown that the number of states per node can have a significant impact on the consensus time, but that the significance of that impact varies with the distance of the initial network state from consensus. To the best of our knowledge, similar results concerning the number of states per node have not been documented. 

We now investigate how the density\footnote{We use the standard definition of network density as the ratio of existing edges to total possible edges.} of a communication network influences the consensus time. Network density is worth investigating because it impacts how easily information can spread between nodes. If a network is too sparse, it may lack a directed spanning tree and thus be unable to produce a consensus. As a network grows denser, it is possible for multiple directed spanning trees to exist, thereby providing multiple options for consensus formation, with some consensus sequences requiring more time than others to complete. Regardless of the density, if the network is well connected (see theorem \ref{thm:consensus}) then there exists a shortest consensus sequence. The existence of this shortest consensus sequence is why we do not necessarily think that the density of a network matters; no matter how many edges get added to a network the consensus sequences they produce can be no better than the shortest and there is no guarantee that they will be used for transmission. In terms of our Markov modeling framework, the network density has a direct impact on the transition probabilities of the Markov chain. An increase in density has the potential to dilute the probability of transmitting along the shortest consensus sequence, but it does not nullify it. 

We conduct this investigation by constructing the completely connected five node network $K_{5}$ with $3$ possible node states and then removing edges at random until a desired density is reached. To account for the multiple configurations that can occur when edges are removed, we average the consensus time of $30$ graphs at each density value.

Figure \ref{fig:experiment6a} shows the results of our investigation as a plot of the consensus time as the density of an arbitrary network increases from $0.6$ to $1.0$ in increments of $0.05$. Each line represents a different initial distribution of node states. The bars on each line bound the 95\% confidence interval of the mean consensus time at the corresponding density value.
\begin{figure}
\centering
\includegraphics[width=80mm]{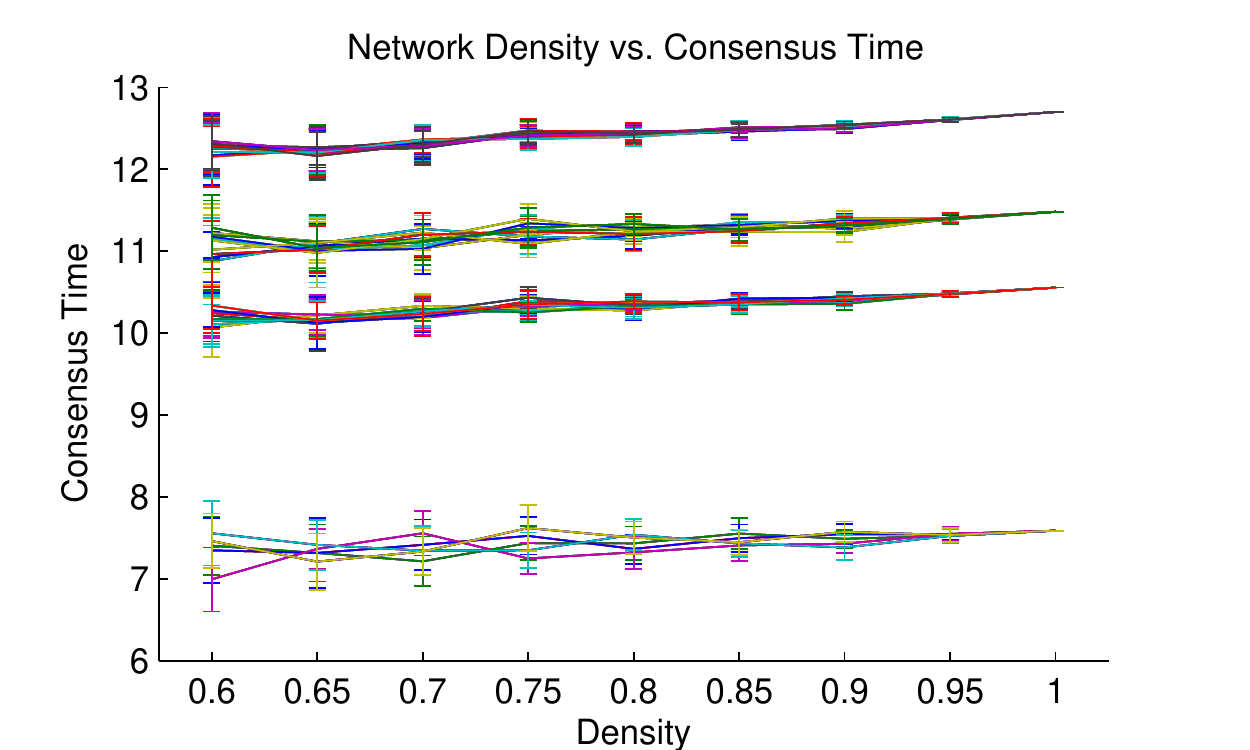}
\caption{The consensus time as the density of an arbitrary network with 5 nodes and 3 states per node increases from $0.6$ to $1.0$ in increments of $0.05$. Each line represents a different initial distribution of node states; the exact state of each line is irrelevant. Each bar represents the 95\% confidence interval of the associated data point.}
\label{fig:experiment6a}
\end{figure} 
There are two primary observations that can be made from this data. The first is that the initial states cluster together into partitions. This is the same behavior that was observed during our validation experiments; see table \ref{tab:experiment1a} and \ref{tab:experiment1b}. Not only can the initial network states be partitioned, but these partitions become clearer as the network becomes denser. At a density value of $0.6$ the boundaries between three of the four partitions are fuzzy, but once the density of the network reaches $0.7$ it is clear which initial states belong to each partition. 
The second observation is that regardless of the initial state, there does not appear to be a significant statistical increase in the consensus time as the density of the network increases, but the slight upward trend suggests that as the density increases the system may be more likely to perform at the slower end of the performance spectrum; as evidenced by the increase in the lower bound of the confidence intervals while the upper bounds remain fairly fixed. For instance, at a density of $0.6$ on the examined network an initial state one unit away from consensus it should take, on average, between $6.5$ and $8$ steps for a consensus to be formed; however, when the density increases to $0.95$ is should take an average of $7.5$ steps to form a consensus.

As previously mentioned, this observed behavior may be due to the additional directed spanning trees that appear as a network becomes more connected. As more viable paths to consensus become available it is feasible that the probability of moving along the shortest path decreases, and as a result longer transmission routes are more likely to be selected, thus resulting in the possibility (but not guarantee) of an increased time to consensus.

\section{Conclusions}

Gossip algorithms are widely used to solve the distributed consensus problem on networks, but issues can arise when nodes receive multiple signals either at the same time or before they are able to finish processing their current work load. In the real world, it can often be hard to limit the amount of information a node is exposed to, especially as networks become larger and their nodes more complex. To address these issues, we introduce the notion of conflict resolution for \emph{unconstrained} gossip algorithms and prove that their application leads to a valid consensus state when the underlying communication network possesses certain properties. We also introduce a methodology that is based on absorbing Markov chains for analyzing unconstrained gossip algorithms that makes use of these conflict resolution algorithms. This technique allows us to calculate both the probabilities of converging to a specific consensus state and the time such convergence is expected to take. Finally, we make use of simulation experiments to verify and supplement our theory with additional results.

We show that the number of nodes in a network, the initial state of the network, and specific network topology are all critical factors in determining how long it will take for consensus formation. The number of possible states that each node can assume has much less impact on system performance than a increase in the number of nodes. Furthermore, the significance of this impact varies with the initial distribution of node states. This finding has important implications for deriving bounds on consensus time for the behavior of unconstrained gossip algorithms \emph{without} first requiring computation of the Markov transition matrix. Specifically, it suggests that it might be possible to leverage existing techniques from the existing body of research on gossip algorithms, as well as techniques from the opinion dynamics and voter model literature in order to study unconstrained gossip when assumptions are made on the initial state of the network. We think that such a bridge is not only a good idea, but essential for future advancements in distributed problem solving.

\section*{Acknowledgments}

This work was supported in part by ONR grant \#N000140911043 and General Dynamics grant \#100005MC.

\bibliographystyle{IEEEtran}
\bibliography{references}

\end{document}